\newcommand\abs[1]{\lvert #1\rvert}
\newtheorem{THM}{Theorem}[section]
\newtheorem{LEM}[THM]{Lemma}
\newtheorem{PROP}[THM]{Proposition}
\newtheorem*{PROPredundant}{Proposition~\ref{prop:redundant}}
\newtheorem{CLM}{Claim}
\theoremstyle{remark}
\theoremstyle{definition}
\newtheorem*{RULES}{Reduction Rules}
\newtheorem{RULE}{Reduction Rule}
\newenvironment{subproof}[1][\proofname]{
	
	\begin{proof}[#1]}{\end{proof}}
\newcommand{\NP}{\mbox{{\textsf{NP}}}}
\newcommand{\VC}{\textsc{Vertex Cover}}
\newcommand{\CL}{\textsc{Cluster Vertex Deletion}}
\newcommand{\LPD}{\textsc{$3$-Leaf Power Deletion}}
\begin{document}
\title{A polynomial kernel for $3$-leaf power deletion}
\footnotetext{The paper is published in Algorithmica~\cite{AEKO-journal} and an extended abstract appeared in the Proceedings of the 45th International Symposium on Mathematical Foundations of Computer Science~\cite{AEKO-MFCS2020}.}
\author[$\ast$2,1]{Jungho~Ahn}
\author[3]{Eduard~Eiben}
\author[4,1]{O-joung~Kwon\thanks{Supported by the Institute for Basic Science (IBS-R029-C1).}\thanks{Supported by the National Research Foundation of Korea (NRF) grant funded by the Ministry of Education (No. NRF-2018R1D1A1B07050294 and No. NRF-2021K2A9A2A11101617).}}
\author[$\ast$1,2]{Sang-il~Oum}
\affil[1]{Discrete Mathematics Group, Institute for Basic Science (IBS), Daejeon, South~Korea}
\affil[2]{Department of Mathematical Sciences, KAIST, Daejeon,~South~Korea}
\affil[3]{Department of Computer Science, Royal Holloway, University~of~London, Egham, United Kingdom}
\affil[4]{Department of Mathematics, Hanyang University, Seoul,~South~Korea}
\affil[ ]{\small \textit{Email addresses:}  \texttt{junghoahn@kaist.ac.kr},
  \texttt{Eduard.Eiben@rhul.ac.uk}, \texttt{sangil@ibs.re.kr},
  \texttt{ojoungkwon@hanyang.ac.kr}}

\maketitle
\begin{abstract}
	For a non-negative integer $\ell$, the \emph{$\ell$-leaf power of a tree~$T$} is a simple graph $G$ on the leaves of $T$ such that two vertices are adjacent in $G$ if and only if their distance in~$T$ is at most $\ell$.
	We provide a polynomial kernel for the problem of deciding whether we can delete at most $k$ vertices to make an input graph a $3$-leaf power of some tree.
	More specifically, we present a polynomial-time algorithm for an input instance $(G,k)$ for the problem to output an equivalent instance $(G',k')$ such that $k'\leq k$ and $G'$ has at most $O(k^{14})$ vertices.
\end{abstract}

\section{Introduction}\label{sec:intro}

For a non-negative integer $\ell$, the \emph{$\ell$-leaf power of a tree~$T$} is a simple graph~$G$ on the leaves of $T$ such that two vertices $v$ and $w$ are adjacent in $G$ if and only if the distance between $v$ and~$w$ in~$T$ is at most~$\ell$.
For brevity, we say that a graph is an $\ell$-leaf power if it is the $\ell$-leaf power of some tree.
Nishimura, Ragde, and Thilikos \cite{NRT2002} introduced $\ell$-leaf powers to understand the structure of phylogenetic trees in computational biology, as one can model the structure of `similarity graphs' among species as the $\ell$-leaf power of a phylogenetic tree.

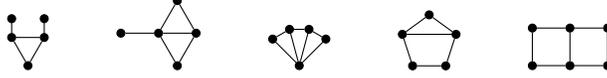
\begin{figure}
	\centering
	\tikzstyle{v}=[circle, draw, solid, fill=black, inner sep=0pt, minimum width=3pt]
	\begin{tikzpicture}[scale=0.75]
	\draw (0,-0.55) node(){};
	\foreach \x in {0,1,2} {
		\draw (\x*120-90:.5) node [v](v\x){};
	}
	\draw (v1)--(v2)--(v0)--(v1);
	\draw (v1) --+(0,0.75)node[v]{};
	\draw (v2) --+(0,0.75)node[v]{};
	\end{tikzpicture}
	\qquad\quad
	\begin{tikzpicture}[scale=0.75]
	\foreach \x in {-1,0,1,3}{
		\draw (60*\x:1) node[v](v\x){};
	}
	\node[v](c){};
	\draw(v-1)--(v0)--(v1)--(c)--(v0);
	\draw(v3)--(c)--(v-1);
	\end{tikzpicture}
	\qquad\quad
	\begin{tikzpicture}[scale=0.75]
	\draw (0,-0.27) node(){};
	\draw (0,-0.15) node[v](c){};
	\foreach \x in {0,1,2,3} {
		\draw (c)-- (30+40*\x:1) node[v] (x\x){};
	}
	\draw (x0)--(x1)--(x2)--(x3);
	\end{tikzpicture}
	\qquad\quad
	\begin{tikzpicture}[scale=0.75]
	\foreach \x in {0,1,2,3,4} {
		\draw (72*\x+90:0.8) node[v](x\x){};
	}
	\draw (x0)--(x1)--(x2)--(x3)--(x4)--(x0);
	\draw (x1)--(x4);
	\end{tikzpicture}
	\qquad\quad
	\begin{tikzpicture}[scale=0.75]
	\draw (0,-0.20) node(){};
	\foreach \y in {0,1}{
		\foreach \x in {0,1,2}{
			\draw (\x,\y) node[v](v\x\y){};
		}
		\draw (v0\y)--(v1\y)--(v2\y);
	}
	\foreach \x in {0,1,2}{
		\draw (v\x0)--(v\x1);
	}
	\end{tikzpicture}
	
	\caption{A bull, a dart, a gem, a house, and a domino.}
	\label{fig:1}
\end{figure}

There have been studies on structures and recognition algorithms for $\ell$-leaf powers.
It is easy to see that $0$-leaf powers are edgeless graphs, $1$-leaf powers are $K_2$ and edgeless graphs, and $2$-leaf powers are the disjoint union of complete graphs.
Dom, Guo, H\"{u}ffner, and Niedermeier~\cite{DGHL2006} showed that $3$-leaf powers are precisely (bull, dart, gem)-free chordal graphs, where chordal graphs are graphs with no chordless cycles of length at least $4$; see Figure~\ref{fig:1} for an illustration of a bull, a dart, and a gem.
There are linear-time algorithms to recognize $\ell$-leaf powers~\cite{BLS2008,CK2007} for $\ell\in\{4,5\}$ and a polynomial-time algorithm to recognize $6$-leaf powers~\cite{Ducoffe2019}.
Eppstein and Havvaei~\cite{EH2019} presented a linear-time algorithm to recognize $k$-degenerate $\ell$-leaf powers for fixed~$k$ and~$\ell$.
Very recently, Lafond~\cite{Lafond2022} presented an algorithm that recognize $\ell$-leaf powers in polynomial time for each fixed~$\ell$.

We are interested in the following \emph{vertex deletion} problem, which generalizes the corresponding recognition problem.

\medskip
\noindent
\fbox{\parbox{0.97\textwidth}{
	\textsc{$\ell$-Leaf Power Deletion}
	\begin{description}
		\item[Input:] A graph $G$ and a non-negative integer $k$
		\item[Parameter:] $k$
		\item[Question:] Is there a set $S\subseteq V(G)$ with $\abs{S}\leq k$ such that $G\setminus S$ is an $\ell$-leaf power?
	\end{description}}}
\medskip

Vertex deletion problems include some of the best studied \NP-hard problems in theoretical computer science, including \textsc{Vertex Cover} and \textsc{Feedback Vertex Set}.
In general, the problem asks whether it is possible to delete at most $k$ vertices from an input graph so that the resulting graph belongs to a specified graph class.
Lewis and Yannakakis \cite{LY1980} showed that every vertex deletion problem to a non-trivial\footnote{A class of graphs is \emph{non-trivial} if both itself and its complement contain infinitely many non-isomorphic graphs.} and hereditary\footnote{A class of graphs is \emph{hereditary} if it is closed under taking induced subgraphs.} graph class is \NP-hard.
Since the class of $\ell$-leaf powers is non-trivial and hereditary for every non-negative integer $\ell$, it follows that the \textsc{$\ell$-Leaf Power Deletion} problem is \NP-hard.

Vertex deletion problems have been investigated on various graph classes through the parameterized complexity paradigm~\cite{CyganFKLMPPS15,DF2013}, which measures the performance of algorithms not only with respect to the input size but also with respect to additional numerical parameters.
The notion of vertex deletion allows a highly natural choice of the parameter, specifically the size~$k$ of the deletion set.
A parameterized problem $\Pi$ is \emph{fixed-parameter tractable} if it can be solved by an algorithm with running time $f(k)\cdot n^{O(1)}$ for the input size $n$ and some computable function $f:\mathbb{N}\to\mathbb{N}$.
It is well known that~$\Pi$ is fixed-parameter tractable if and only if it admits a kernel~\cite{DF2013}.
A \emph{kernel} is a polynomial-time preprocessing algorithm that transforms the given instance of the problem into an equivalent instance whose size is bounded from above by some function $f(k)$ of the parameter.
The function $f(k)$ is usually referred to as the \emph{size} of the kernel.
A \emph{polynomial kernel} is then a kernel with size bounded from above by some polynomial in~$k$.
For a fixed-parameter tractable problem, one of the most natural follow-up questions in parameterized complexity is whether the problem admits a polynomial kernel.
The existence of polynomial kernels for vertex deletion problems has been widely investigated; see Fomin, Lokshtanov, Saurabh, and Zehavi~\cite{kernelization}.

\begin{table}
	\centering
	\makebox[\linewidth]{
	\begin{tabular}{c|c|c|c}
		$\ell$ & Running time & \begin{tabular}{@{}c@{}} Kernel \\ (The number of vertices)\end{tabular} & Remark \\
		\hline
		$0$ & $O(1.2738^k+kn)$~\cite{CKX2010} & $2k-\Omega(\log k)$~\cite{Lampis2011,LPcure} & Equivalent to \textsc{VC} \\
		\hline
		$1$ & $O(1.2738^k+kn)$~\cite{CKX2010} & $2k-\Omega(\log k)$~\cite{Lampis2011,LPcure} & Reduced to \textsc{VC} \\
		\hline
		$2$ & $O(2^kk\cdot m\sqrt{n}\log n)$~\cite{HKMN2010} & $O(k^{5/3})$~\cite{cluster_kernel} & Equivalent to \textsc{CD} \\
		\hline
		$3$ & $O(37^k\cdot n^7(n+m))$~\cite{Thesis-Ahn} & $O(k^{14})$ [Theorem~\ref{thm:main}] & - \\
	\end{tabular}}
	\caption{The current best known running time of fixed-parameter algorithms and upper bounds for the number of vertices in kernels for \textsc{$\ell$-Leaf Power Deletion} with small~$\ell$. For an input graph $G$, let $n:=\abs{V(G)}$ and $m:=\abs{E(G)}$. \textsc{VC} and \textsc{CD} stand for \VC\ and \CL, respectively.}
	\label{tab:results}
\end{table}

We are going to survey known results of the \textsc{$\ell$-Leaf Power Deletion} problems for small values of $\ell$; see Table~\ref{tab:results}.
Let $(G,k)$ be an instance of the \textsc{$\ell$-Leaf Power Deletion} problem.

The \textsc{$0$-Leaf Power Deletion} problem is identical to \VC.
Currently, the best known fixed-parameter algorithm for \VC\ runs in time $O(1.2738^k+k\abs{V(G)})$ by Chen, Kanj, and Xia~\cite{CKX2010}, and $2k-\Omega(\log k)$ is the best known upper bound for the number of vertices in kernels for \VC, independently by Lampis~\cite{Lampis2011} and Lokshtanov, Narayanaswamy, Raman, Ramanujan, and Saurabh~\cite{LPcure}.

Since $1$-leaf powers are $K_2$ or edgeless graphs, one can easily reduce the \textsc{$1$-Leaf Power Deletion} problem to \VC, so that it can be solved in time $O(1.2738^k+k\abs{V(G)})$ and admits a kernel with $2k-\Omega(\log k)$ vertices.

The \textsc{$2$-Leaf Power Deletion} problem was studied under the name of the \CL\ problem.
H\"{u}ffner, Komusiewicz, Moser, and Niedermeier~\cite{HKMN2010} showed that the \CL\ problem is fixed-parameter tractable by presenting an algorithm with running time $O(2^kk\cdot\abs{E(G)}\sqrt{\abs{V(G)}}\log\abs{V(G)})$, and Fomin, Le, Lokshtanov, Saurabh, Thomass\'{e}, and Zehavi~\cite{cluster_kernel} presented a kernel with $O(k^{5/3})$ vertices for the problem.

Our aim is to investigate the situation for $\ell=3$.
Dom, Guo, H\"{u}ffner, and Niedermeier~\cite{DGHL2006} already showed that the \LPD\ problem is fixed-parameter tractable.
The algorithm in~\cite{EGK2018} can be modified to a \emph{single-exponential} fixed-parameter algorithm for the \LPD\ problem that runs in time $\alpha^k\cdot n^{O(1)}$ for the input size~$n$ and some constant $\alpha>1$; see~\cite{Thesis-Ahn}.
Here is our main theorem.

\begin{restatable}{THM}{main}\label{thm:main}
	The \LPD\ problem admits a kernel with $O(k^{14})$ vertices.
\end{restatable}

We remark that for a class $\mathcal C$ of graphs, it is possible that $\mathcal C$ admits a polynomial kernel for the vertex deletion problem to~$\mathcal{C}$ and yet a subclass of~$\mathcal C$ does not.
For instance, using the well-known sunflower lemma, it is easy to show that the vertex deletion problem to the class of graphs having no induced cycle of length $5$ admits a polynomial kernel; see~\cite{kernelization}.
However, unless FPT=W[2], the vertex deletion problem to the class of perfect graphs, which is a subclass of the previous class, does not admit a kernel, because the problem is W[2]-hard~\cite{perfect2013}.
Thus, we cannot simply derive our results from known polynomial kernels on vertex deletion problems to classes of chordal graphs~\cite{JP2018,ALMSZ2019} or distance-hereditary graphs~\cite{KK2021}, each of which is a superclass of the class of $3$-leaf powers.

Let us sketch our method.
The first step is to find a \emph{``good''} approximate solution.
For an input graph $G$, we will find a \emph{$1$-redundant modulator}, of size $O(k^2)$, that is a set $S\subseteq V(G)$ such that $G\setminus(S\setminus\{v\})$ is a $3$-leaf power for every vertex $v\in S$.
The technique of computing a $1$-redundant modulator is quite standard in several kernelization algorithms~\cite{JP2018,KK2017-block,KK2021,intervalkernel}; after finding an approximate solution via known approximation algorithms, we attach a few more vertices to the solution so that it becomes $1$-redundant.
Since $\abs{S}=O(k^2)$, to obtain a polynomial kernel, it suffices to reduce the size of~$G\setminus S$.
To do this, we reduce the number of components of $G\setminus S$, and then reduce the size of each component of~$G\setminus S$.

To reduce the number of components, we will classify components of $G\setminus S$.
For a component~$C$ of $G\setminus S$, if $N_G(V(C))\cup N_G(V(G)\setminus V(C))$ is a clique, then the partition $(V(C),V(G)\setminus V(C))$ will be called a complete split of $G$ and it will be easier to handle such components.
If it does not form a complete split, then we will find a pair of vertices in $N_G(V(C))\subseteq S$ certifying that $(V(C),V(G)\setminus V(C))$ is not a complete split.
Such a pair will be called a blocking pair and we say that $V(C)$ is blocked by the pair.
A key property, Lemma~\ref{lem:obs from blocking}, of a blocking pair is that 
for two components $C_1$ and~$C_2$ of $G\setminus S$, 
if both $V(C_1)$ and $V(C_2)$ are blocked by the same pair $\{x,y\}$, 
then $V(C_1)\cup V(C_2)\cup\{x,y\}$ induces a subgraph that is not a $3$-leaf power.
Through a marking process with pairs in $S$, we show that if there are many components of $G\setminus S$ blocked by some pairs, then we can safely remove all edges inside some of the components.
Afterward, we bound the number of isolated vertices of $G\setminus S$ through another marking process, and then design a series of reduction rules to bound the size of the remaining components of $G\setminus S$ by using a tree-like structure of $3$-leaf powers by Brandst\"{a}dt and Le~\cite{BL2006}.

We organize this paper as follows.
In Section~\ref{sec:prelim}, we summarize some terminologies in graph theory and parameterized complexity, and introduce known properties of $3$-leaf powers.
In Section~\ref{sec:overview}, we describe our polynomial kernel without proofs.
In Section~\ref{sec:redundant}, we present a polynomial-time algorithm that finds a small $1$-redundant modulator.
In Section~\ref{sec:safeness}, we show the safeness of all reduction rules for the \LPD\ problem.
In Section~\ref{sec:main proof}, we prove Theorem~\ref{thm:main}, and in Section~\ref{sec:conclu}, we conclude this paper with some open problems.

\section{Preliminaries}\label{sec:prelim}

In this paper, all graphs are finite and simple.
For every vertex $v$, we denote by~$N_G(v)$ the set of neighbors of~$v$ in~$G$.
For every set $X\subseteq V(G)$, we denote by~$N_G(X)$ the set of vertices in $V(G)\setminus X$ that are adjacent to some vertices in~$X$.
We let $N_G[X]:=N_G(X)\cup X$.
We may omit the subscripts of these notations if it is clear from the context.
For disjoint subsets $X$ and~$Y$ of $V(G)$, we say that~$X$ is \emph{complete} to $Y$ if each vertex in~$X$ is adjacent to all vertices in~$Y$, and \emph{anti-complete} to $Y$ if each vertex in~$X$ is non-adjacent to all vertices in~$Y$.
Note that if~$X$ is complete to~$Y$, then~$Y$ is complete to~$X$
and if~$X$ is anti-complete to~$Y$, then~$Y$ is anti-complete to~$X$.
We denote by $G\setminus X$ the graph obtained from~$G$ by removing all vertices in~$X$ and all edges incident with some vertices in~$X$, and let $G[X]:=G\setminus(V(G)\setminus X)$.
We may write $G\setminus v$ instead of $G\setminus\{v\}$ for a vertex $v$ of $G$.
For a set $T\subseteq E(G)$, let $G\setminus T$ be the graph obtained from $G$ by removing all edges in $T$.

A graph $G$ is \emph{trivial} if $\abs{V(G)}\leq1$, and \emph{non-trivial}, otherwise.
A \emph{clique} is a set of pairwise adjacent vertices.
A graph is \emph{complete} if its vertex set is a clique, and \emph{incomplete}, otherwise.
An \emph{independent set} is a set of pairwise non-adjacent vertices.
Distinct vertices $v$ and $w$ of $G$ are \emph{twins} in $G$ if $N_G(v)\setminus\{w\}=N_G(w)\setminus\{v\}$.
Twins $v$ and $w$ in $G$ are \emph{true} if they are adjacent, and \emph{false}, otherwise.
A \emph{twin-set} in $G$ is a set of pairwise twins in~$G$.
Note that every twin-set is a clique or an independent set.
A twin-set is \emph{true} if it is a clique, and \emph{false} otherwise.

A vertex $v$ of a graph $G$ is a \emph{cut-vertex} of $G$ if $G\setminus v$ has more components than~$G$.
A set $X\subseteq V(G)$ is a \emph{vertex cut} of $G$ if $G\setminus X$ has more components than~$G$.
A \emph{clique cut-set} is a vertex cut which is a clique.
A set $Y\subseteq V(G)$ is a \emph{feedback vertex set} of $G$ if $G\setminus Y$ is a forest.

A vertex of a graph is \emph{isolated} if it has no neighbors.
Vertices of a tree are also called \emph{nodes}.
A node of a tree is a \emph{leaf} if it has exactly one neighbor, and is \emph{branching} if it has at least three neighbors.
For graphs $G_1,\ldots,G_m$, a graph is \emph{$(G_1,\ldots,G_m)$-free} if it has no induced subgraph isomorphic to $G_i$ for all $i=1,\ldots,m$.

We say that a reduction rule is \emph{safe} if every input instance of the \LPD\ problem is equivalent to the resulting instance obtained by applying the reduction rule to the input instance.

\subsection{Parameterized problems and kernels}

For a fixed finite set $\Sigma$ of alphabets, an \emph{instance} is an element in $\Sigma^*\times\mathbb{N}$.
For an instance $(I,k)$, we call $k$ a \emph{parameter}.
A \emph{parameterized problem} is a set $\Pi\subseteq\Sigma^*\times\mathbb{N}$.
A parameterized problem $\Pi$ is \emph{fixed-parameter tractable} if there is an algorithm, called a \emph{fixed-parameter algorithm} for $\Pi$, that correctly confirms whether an input instance $(I,k)$ is contained in $\Pi$ in time $f(k)\cdot n^{O(1)}$ for the size $n$ of $I$ and some computable function $f:\mathbb{N}\to\mathbb{N}$.

For a parameterized problem~$\Pi$, an instance $(I,k)$ is a \emph{yes-instance} of~$\Pi$ if $(I,k)\in\Pi$, and a \emph{no-instance} of~$\Pi$, otherwise.
Instances $(I,k)$ and $(I',k')$ are \emph{equivalent} with respect to~$\Pi$ if both of them are yes-instances or are no-instances of $\Pi$.
A \emph{kernel} for~$\Pi$ is a polynomial-time preprocessing algorithm that given an instance $(I,k)$ of~$\Pi$, outputs an instance $(I',k')$ equivalent to $(I,k)$ with respect to~$\Pi$ such that $\abs{I'}+k'\leq g(k)$ for some computable function $g:\mathbb{N}\to\mathbb{N}$.
Such a function $g(k)$ is the \emph{size} of the kernel.
A \emph{polynomial kernel} for $\Pi$ is a kernel for $\Pi$ with the size as a polynomial in~$k$.
We may omit the term ``for~$\Pi$'' and ``with respect to $\Pi$'' of all these definitions if it is clear from the context.
It is known that a decidable problem~$\Pi$ is fixed-parameter tractable if and only if $\Pi$ admits a kernel; see Cygan et al.~\cite[Lemma 2.2]{CyganFKLMPPS15}.
Thus it is an interesting problem to investigate which fixed-parameter tractable problems admit a polynomial kernel.
Some problems are proved to have no polynomial kernel under some complexity hypothesis; see~Fomin et al.~\cite{kernelization}.

\subsection{Characterizations of $3$-leaf powers}\label{sec:3-leaf powers}

The graphs in Figure~\ref{fig:1} are called a \emph{bull}, a \emph{dart}, a \emph{gem}, a \emph{house}, and a \emph{domino}, respectively.
A \emph{hole} is an induced cycle of length at least $4$.
A graph is \emph{chordal} if it has no holes.
Dom, Guo, H\"{u}ffner, and Niedermeier~\cite{DGHL2006} presented the following characterization of $3$-leaf powers.

\begin{THM}[Dom, Guo, H\"{u}ffner, and Niedermeier {\cite[Theorem~1]{DGHL2006}}]\label{thm:obs 3-leaf powers}
	A graph $G$ is a $3$-leaf power if and only if $G$ is (bull, dart, gem)-free and chordal.
\end{THM}

Based on Theorem~\ref{thm:obs 3-leaf powers}, we say that a graph $H$ is an \emph{obstruction} if $H$ either is a hole, or is isomorphic to one of the bull, the dart, and the gem.
An obstruction $H$ is \emph{small} if $\abs{V(H)}\leq5$.
The following lemma observes several properties of obstructions.
We omit its straightforward proof.

\begin{LEM}\label{lem:observation}
	Every obstruction satisfies all of the following properties.
	\begin{enumerate}[label=\rm (O\arabic*),ref=O\arabic*]
		\item\label{O1} No obstructions have true twins.
		\item\label{O2} No small obstructions have an independent set of size at least $4$.
		\item\label{O3} No obstructions have $K_4$ or $K_{2,3}$ as a subgraph.
		\item\label{O4} No obstruction $H$ has a cut-vertex $v$ such that $H\setminus v$ has two components having the same number of vertices.
		\item\label{O5} False twins in an obstruction have degree $2$.
		\item\label{O6} If a vertex of an obstruction has exactly one neighbor $w$, then $w$ has degree at least $3$.
		\item\label{O7} An obstruction has at least three distinct vertices of degree $2$ if and only if it is a hole.
	\end{enumerate}
\end{LEM}

Brandst\"{a}dt and Le \cite{BL2006} showed that a graph is a $3$-leaf power if and only if it can be obtained from some forest $F$ by substituting each node $u$ of $F$ with a non-empty clique $B_u$ of arbitrary size.
We rephrase this characterization by using the following definition.

A \emph{tree-clique decomposition} of a graph $G$ is a pair $(F,\{B_u:u\in V(F)\})$ of a forest $F$ and a family $\{B_u:u\in V(F)\}$ of non-empty subsets of $V(G)$ satisfying the following two conditions.
\begin{enumerate}[label=(\arabic*)]
	\item $\{B_u:u\in V(F)\}$ is a partition of $V(G)$.
	\item \label{item:tcd2} Distinct vertices $x$ and $y$ of $G$ are adjacent if and only if $F$ has either a node $u$ such that $\{x,y\}\subseteq B_u$, or an edge $vw$ such that $x\in B_v$ and $y\in B_w$.
\end{enumerate}
For every node $u$ of $F$, we call $B_u$ a \emph{bag of $u$}.
We say that $B$ is a \emph{bag of~$G$} if $B$ is a bag of some node of~$F$
in some tree-clique decomposition $(F,\{B_u\}_{u\in V(F)})$ of~$G$.
Note that each bag is a clique by \ref{item:tcd2}.

\begin{THM}[Brandst\"{a}dt and Le {\cite[Theorem~14]{BL2006}}]\label{thm:tree-clique decomposition}
	A graph is a $3$-leaf power if and only if it has a tree-clique decomposition.
	One can construct a tree-clique decomposition of a $3$-leaf power in linear time.
	Moreover, every connected incomplete $3$-leaf power has a unique tree-clique decomposition.
\end{THM}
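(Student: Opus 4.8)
The plan is to treat the equivalence through the \emph{critical clique graph} and then read off the constructive and uniqueness parts. Call two vertices $x,y$ of $G$ \emph{equivalent} if $N_G[x]=N_G[y]$; this is an equivalence relation, each of its classes is a clique (any two distinct equivalent vertices are adjacent), and these classes --- the \emph{critical cliques} --- partition $V(G)$. Let $CC(G)$ be the graph on the set of critical cliques in which two critical cliques are adjacent precisely when $G$ has an edge between them. An elementary remark, valid in every graph, is that any vertex lying outside a critical clique $K$ is either complete or anti-complete to $K$, because all vertices of $K$ have the same closed neighborhood; consequently two critical cliques are adjacent in $CC(G)$ if and only if $G$ is complete between them, and for $x$ in a critical clique $K$ and $y$ in a critical clique $K'\neq K$ we have $xy\in E(G)$ if and only if $KK'\in E(CC(G))$.

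For the implication ``$G$ has a tree-clique decomposition $\Rightarrow$ $G$ is a $3$-leaf power'' I would build a witnessing tree by hand. Given a tree-clique decomposition $(F,\{B_u:u\in V(F)\})$, form a graph $T$ from $F$ by attaching, for each node $u$, every vertex of $B_u$ as a new leaf adjacent to $u$, and, if $F$ is disconnected, linking the resulting components by sufficiently long paths through fresh nodes. Since every bag is non-empty, $T$ is a tree whose set of leaves is exactly $V(G)$ (assuming $|V(G)|\geq 2$; the one-vertex graph is handled by convention). A direct computation of distances in $T$ shows that two vertices lying in a common bag are at distance $2$, two vertices lying in bags of an edge of $F$ are at distance $3$, and every other pair of vertices of $G$ is at distance at least $4$; comparing this with condition~(2) of a tree-clique decomposition shows that $G$ is the $3$-leaf power of $T$.

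For the converse, given a $3$-leaf power $G$ I would take $F:=CC(G)$ with $B_u$ equal to the corresponding critical clique. Conditions~(1) and~(2) then hold by the elementary remark above, so the whole content is that $CC(G)$ is a forest. If it is not, choose a shortest cycle $K_1K_2\cdots K_mK_1$ of $CC(G)$; being shortest, it is chordless. If $m\geq 4$, then picking one vertex from each $K_i$ yields, by the correspondence between edges of $G$ and edges of $CC(G)$, an induced cycle of length $m$ in $G$, that is a hole, contradicting Theorem~\ref{thm:obstructions for 3-leaf powers}. If $m=3$, then $B_1\cup B_2\cup B_3$ is a clique while $B_1,B_2,B_3$ are pairwise distinct critical cliques, so for each of the three pairs there is a vertex of $G$ complete to one member of the pair and anti-complete to the other. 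Combining a representative $x_i\in B_i$ with these (at most three) distinguishing vertices, and repeatedly using the remark above about vertices and critical cliques, a finite case analysis produces an induced bull, dart, gem, or $4$-hole, again contradicting Theorem~\ref{thm:obstructions for 3-leaf powers}. This triangle case is the main obstacle: it carries essentially all of the non-generic work and must be pushed through by a hand check of the (few) configurations against all of the bull, the dart, the gem, and the $4$-hole.

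It remains to address effectiveness and uniqueness. The critical cliques and the graph $CC(G)$ can plainly be computed from $G$ in polynomial time, so once $G$ is known to be a $3$-leaf power (Theorem~\ref{thm:linear recognition for 3-leaf powers}) the decomposition above is obtained in polynomial time. For uniqueness when $G$ is connected and incomplete, let $(F,\{B_u\})$ be any tree-clique decomposition. By condition~(2), every vertex of a bag $B_u$ has closed neighborhood $B_u\cup\bigcup_{w\sim u}B_w$, so $B_u$ is contained in a critical clique; if this containment were proper, a vertex $z$ of the same critical clique would lie in another bag $B_{u'}$, and unfolding condition~(2) together with the fact that $F$ is a forest forces $u$ and $u'$ to be adjacent with no other neighbors, whence $F$, being connected because $G$ is, is the single edge $uu'$ and $G=B_u\cup B_{u'}$ is complete --- contradicting that $G$ is incomplete. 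Therefore each bag equals a critical clique; since the bags and the critical cliques are two partitions of $V(G)$ with every bag a critical clique, they coincide, and the edge set of $F$ is then forced to equal $E(CC(G))$ by the correspondence above. Hence $G$ has a unique tree-clique decomposition.
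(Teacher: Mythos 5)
The paper does not prove this theorem; it is imported verbatim from Brandst\"{a}dt and Le, so your argument is being compared against a citation rather than against an in-paper proof. That said, your route through the critical cliques (the maximal true twin-sets) is the standard and correct one, and all the pieces you spell out in full are sound: the forward direction by attaching each bag as pendant leaves to its node of $F$ and checking that leaf-to-leaf distances are $2$, $3$, or $\geq 4$ according to condition~(2); the identification of bags with critical cliques and of $E(F)$ with $E(CC(G))$ in the uniqueness argument (where connectivity and incompleteness are used exactly where they must be); and the observation that an induced cycle of length $m\geq 4$ in $CC(G)$ lifts to a hole of $G$. The one step you leave genuinely unfinished is the $m=3$ case, i.e.\ that three pairwise adjacent, pairwise distinct critical cliques force an induced bull, dart, gem, or $4$-hole. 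You correctly identify this as the crux, but you do not carry out the case analysis --- and you do not need to, because this is precisely the content of Lemma~\ref{lem: always small obs} (Dom, Guo, H\"{u}ffner, and Niedermeier): a triangle in the quotient $H$ obtained by taking one vertex per maximal true twin-set is exactly a triangle in $CC(G)$, and that lemma says such a triangle exists if and only if $G$ contains a bull, dart, or gem. Citing it closes the only real hole in your write-up. Two cosmetic points to tidy: when $F$ is disconnected you must link its components by paths with at least one internal fresh node so that cross-component leaves end up at distance at least $4$ and no fresh node becomes a leaf of $T$; and the one-vertex graph genuinely needs the convention you mention, since under the paper's definitions no tree has exactly one leaf.
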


We remark that every connected incomplete $3$-leaf power has at least three bags.
Brandst\"{a}dt and Le \cite{BL2006} showed that in a connected incomplete $3$-leaf power, two vertices belong to the same bag if and only if they are true twins.
Thus, for such a graph $G$, a set $X\subseteq V(G)$ is a bag of $G$ if and only if~$X$ is a maximal true twin-set in $G$.

\subsection{Characterizations of distance-hereditary graphs}\label{sec:distance-hereditary}

A graph~$G$ is \emph{distance-hereditary} if for every connected induced subgraph $H$ of $G$ and vertices $v$ and $w$ of $H$, the distance between $v$ and $w$ in~$H$ is equal to the distance between $v$ and $w$ in $G$.
Bandelt and Mulder~\cite{BM1986} presented the following characterization of distance-hereditary graphs.

\begin{THM}[Bandelt and Mulder {\cite[Theorem~2]{BM1986}}]\label{thm:obs DH}
	A graph is distance-hereditary if and only if it is (gem, house, domino)-free and has no holes of length at least $5$.
\end{THM}

Since $3$-leaf powers are chordal and both the house and the domino have a hole, every $3$-leaf power is (house, domino)-free.
Therefore, by Theorems~\ref{thm:obs 3-leaf powers} and \ref{thm:obs DH}, every $3$-leaf power is distance-hereditary.

The following lemma presents a necessary condition to be distance-hereditary.
A proof of the following lemma is readily derived from the definition of a distance-hereditary graph.

\begin{LEM}\label{lem:not DH}
	Let $P$ be an induced path of a graph $G$ having length at least~$3$.
	If $G$ has a vertex~$v$ adjacent to both ends of $P$, then $G[V(P)\cup\{v\}]$ is not distance-hereditary.
\end{LEM}

\section{Overview}\label{sec:overview}
In this section, we will present the polynomial kernel for the \LPD\ problem
without showing its proof.
All proofs will be presented in later sections.

A set $S\subseteq V(G)$ is a \emph{modulator} of a graph~$G$ if $G\setminus S$ is a $3$-leaf power.
A modulator~$S$ of~$G$ is \emph{$1$-redundant} if $G\setminus(S\setminus\{v\})$ is a $3$-leaf power for every vertex $v\in S$.
Note that if $S$ is a modulator of $G$, then for every induced subgraph~$G'$ of~$G$, the set $S\cap V(G')$ is a modulator of~$G'$
and therefore if $(G,k)$ is a yes-instance for the \LPD{} problem, then so is $(G',k)$.
We remark that if $S$ is a $1$-redundant modulator of~$G$, then every obstruction in~$G$ has at least two vertices in~$S$.

The following proposition ensures that we can find a $1$-redundant modulator of small size.
Its proof will be presented in Section~\ref{sec:redundant}.

\begin{PROPredundant}
	There exists a polynomial-time algorithm that for every instance $(G,k)$ of the \LPD\ problem with $k>0$, outputs an equivalent instance $(G',k')$ and a $1$-redundant modulator $S$ of $G'$ such that $\abs{V(G')}\leq\abs{V(G)}$, $k'\leq k$, and $\abs{S}\leq84k^2+7k$.
\end{PROPredundant}

To introduce reduction rules, we need to define a few concepts.
A \emph{blocking pair} for $X\subseteq V(G)$ is an unordered pair $\{v,w\}$ of distinct vertices in $N_G(X)$ such that if $v$ and $w$ are adjacent and $N_G(v)\cap X=N_G(w)\cap X$, then $N_G(v)\cap X$ is not a clique.
We say that $X$ is \emph{blocked by $\{v,w\}$} if $\{v,w\}$ is a blocking pair for $X$.
We remark that if $N_G(X)$ has a blocking pair $\{v,w\}$ for some subset of $X$, then $X$ is also blocked by $\{v,w\}$.

For a graph $Q$, a set $X\subseteq V(Q)$, and a non-negative integer $\ell$, an \emph{$(X,\ell)$-matching} of $Q$ is a set $M\subseteq E(Q)$ such that each vertex in $X$ is incident with at most $\ell$ edges in $M$, and each vertex in $V(Q)\setminus X$ is incident with at most one edge in $M$.

We now introduce the reduction rules.
We will prove their safeness in Section~\ref{sec:safeness}.
We may assume that $k>0$,
because otherwise we can directly solve the problem by applying the linear-time algorithm of Theorem~\ref{thm:tree-clique decomposition} for recognizing $3$-leaf powers
and output $(K_1,0)$ or $(K_{2,2},0)$ depending on whether or not the input graph is a $3$-leaf power.
We may also assume that $k<\abs{V(G)}$ for the input graph~$G$,
because otherwise we can output $(K_1,0)$, as it is a yes-instance.

\begin{RULES}
	Let $(G,k)$ be an instance of the \LPD\ problem with $k>0$ and let~$S$ be a $1$-redundant modulator of $G$.
	
	\begin{enumerate}[label=(R\arabic*),ref=R\arabic*]
		\item\label{Rnumberfirst} 	Let $S^+$ be the set of vertices $u\in S$ such that for each component $C$ of~$G\setminus S$, the set $N_G(u)\cap V(C)$ is a true twin-set in $C$.
		Let $X$ be the set of $2$-element subsets of $S^+$ and $Y$ be the set of non-trivial components of $G\setminus S$ having no neighbors of $S\setminus S^+$.
		Let $Q$ be a bipartite graph on $(X\times\{1,2,3\},Y)$ satisfying all of the following conditions.
		\begin{enumerate}[label=(\arabic*)]
			\item $(\{v,w\},1)\in X\times\{1\}$ and $C\in Y$ are adjacent in $Q$ if and only if~$V(C)$ is blocked by $\{v,w\}$.
			\item $(\{v,w\},2)\in X\times\{2\}$ and $C\in Y$ are adjacent in $Q$ if and only if~$C$ has a vertex adjacent to both $v$ and $w$.
			\item $(\{v,w\},3)\in X\times\{3\}$ and $C\in Y$ are adjacent in $Q$ if and only~if~$C$ has adjacent vertices $x$ and $y$ such that both $vx$ and $wx$ are edges of $G$, and both $vy$ and $wy$ are non-edges of~$G$.
		\end{enumerate}
		If $Q$ has a maximal $(X\times\{1,2,3\},k+2)$-matching $M$ avoiding some $U\in Y$, then replace $(G,k)$ with $(G\setminus E(U),k)$.

		\item\label{Rnumbersecond}
		Let $\mathcal{A}$ be the set of ordered pairs $(A_1,A_2)$ of disjoint subsets of $S$ such that $2\leq\abs{A_1}+\abs{A_2}\leq4$, and $X$ be the set of isolated vertices in $G\setminus S$.
		For each $(A_1,A_2)\in\mathcal{A}$, let $X_{A_1,A_2}$ be a maximal set of vertices $v\in X$ such that $N_G(v)\cap(A_1\cup A_2)=A_1$ and $\abs{X_{A_1,A_2}}\leq k+3$.
		If $X\setminus\bigcup_{(A_1,A_2)\in\mathcal{A}}X_{A_1,A_2}$ contains a vertex $u$, then replace $(G,k)$ with $(G\setminus u,k)$.

		\item\label{Rcomplete}
		Let $\mathcal{A}$ be the set of ordered pairs $(A_1,A_2)$ of disjoint subsets of $S$ such that $2\leq\abs{A_1}+\abs{A_2}\leq4$, and $C$ be a complete component of $G\setminus S$.
		For each $(A_1,A_2)\in\mathcal{A}$, let $X_{A_1,A_2}$ be a maximal set of vertices~$v$ of~$C$ such that $N_G(v)\cap(A_1\cup A_2)=A_1$ and $\abs{X_{A_1,A_2}}\leq k+3$.
		If $C\setminus\bigcup_{(A_1,A_2)\in\mathcal{A}}X_{A_1,A_2}$ has a vertex~$u$, then replace $(G,k)$ with $(G\setminus u,k)$.

		\item\label{Rtruetwin}
		If $G$ has a true twin-set $X$ of size at least $k+2$, then replace $(G,k)$ with $(G\setminus v,k)$ for any vertex $v\in X$.

		\item\label{Rbagfirst}
		Let $B$ be a maximal true twin-set in $G\setminus S$.
		If $G\setminus(S\cup B)$ has a component $D$ such that $V(D)\cap N_G(S)=\emptyset$ and $V(D)\setminus N_G(B)\neq\emptyset$, then replace $(G,k)$ with $(G\setminus(V(D)\setminus N_G(B)),k)$.

		\item\label{Rbagsecond}
		Let $B$ be a maximal true twin-set in $G\setminus S$.
		If $G\setminus(S\cup B)$ has distinct components $D_1,\ldots,D_{k+4}$ such that $N_G(V(D_1))=\cdots=N_G(V(D_{k+4}))$, and either $V(D_1)\cup\cdots\cup V(D_{k+4})\subseteq N_G(B)$ or $\emptyset\neq V(D_i)\cap N_G(B)\neq V(D_i)$ for every $i\in\{1,\ldots,k+4\}$, then replace $(G,k)$ with $(G\setminus V(D_1),k)$.

		\item\label{Rbagthird}
		Let $B_1,\ldots,B_m$ be pairwise disjoint maximal true twin-sets in $G\setminus S$ for $m\geq6$ such that $N_G(B_i)=B_{i-1}\cup B_{i+1}$ for every $i\in\{2,\ldots,m-1\}$.
		Let $\ell$ be an integer in $\{3,\ldots,m-2\}$ such that $\abs{B_\ell}\leq\abs{B_i}$ for every $i\in\{3,\ldots,m-2\}$, and $G'$ be a graph obtained from $G\setminus((B_3\cup\cdots\cup B_{m-2})\setminus B_\ell)$ by making $B_\ell$ complete to $B_2\cup B_{m-1}$.
		Then replace $(G,k)$ with $(G',k)$.	
	\end{enumerate}
\end{RULES}

\begin{algorithm}[t]
	\caption{Kernelization for \LPD}
	\label{alg}
	\begin{algorithmic}[1]
		\Function {\textsc{Compress}}{$G,k$}
		\If {$k=0$}
		\If {$G$ is a $3$-leaf power}
		\Return $(K_1,0)$.
		\Else~\Return $(K_{2,2},0)$.
		\EndIf
		\EndIf
		\State {Find an instance $(G',k')$ equivalent to $(G,k)$ and a $1$-redundant modulator $S$ of $G'$ having size $O(k^2)$ by Proposition~\ref{prop:redundant}.}
		\If {$k'<k$} \Return \textsc{Compress}($G',k'$).\label{k}
		\ElsIf {$\abs{S}\leq k+1$} \Return $(K_1,0)$.
		\EndIf 
		\If {(R$i$) for some $i\in\{1,\ldots,7\}$ is applicable to $(G',k')$ with $S$} \Return \textsc{Compress}($G'',k''$) where $(G'',k'')$ is the resulting instance obtained from $(G',k')$ with $S$ by applying (R$i$).
		\Else~\Return $(G',k')$. \label{Return}
		\EndIf
		\EndFunction
	\end{algorithmic}
\end{algorithm}

Our kernel for the \LPD\ problem is presented in Algorithm~\ref{alg}.
In Section~\ref{sec:main proof}, we will show that if neither (\ref{Rnumberfirst}) nor (\ref{Rnumbersecond}) is applicable to $(G,k)$, then $G\setminus S$ has $O(k\abs{S}^2)$ non-trivial components and $O(k\abs{S}^4)$ isolated vertices.
We will also show that if none of (\ref{Rnumberfirst}), (\ref{Rcomplete}), (\ref{Rtruetwin}), (\ref{Rbagfirst}), (\ref{Rbagsecond}), and (\ref{Rbagthird}) is applicable to $(G,k)$, then each complete component of $G\setminus S$ has $O(k\abs{S}^4)$ vertices, and each incomplete component of $G\setminus S$ has $O(k^2\abs{S}^2)$ vertices.
Since the size of $S$ can be bounded from above by a polynomial in $k$ by Proposition~\ref{prop:redundant}, we will have a desired polynomial kernel.

\section{Small $1$-redundant modulators}\label{sec:redundant}

To find a small $1$-redundant modulator, we first find a modulator by combining a maximal packing of small obstructions with an outcome of an approximation algorithm for the following problem.

\medskip
\noindent
\fbox{\parbox{0.97\textwidth}{
	\textsc{\textsc{Weighted Feedback Vertex Set}}
\begin{description}
	\item[Input:] A graph $G$, a function $w:V(G)\to\mathbb{Q}\cap[0,\infty)$, and a non-negative rational number $k$
	\item[Parameter:] $k$
	\item[Question:] Is there a set $S\subseteq V(G)$ with $\sum_{v\in S}w(v)\leq k$ such that $G\setminus S$ is a forest?
\end{description}}}
\medskip

Bafna, Berman, and Fujito presented a $2$-approximation algorithm for \textsc{Weighted Feedback Vertex Set} as follows.

\begin{THM}[Bafna, Berman, and Fujito \cite{BBF1999}]\label{BBF1999}
	For a graph $G$, a function $w:V(G)\to\mathbb{Q}\cap[0,\infty)$, and a positive rational number $k$, one can either confirm that $G$ has no feedback vertex set $S$ with $\sum_{v\in S}w(v)\leq k$, or find a feedback vertex set $S$ with $\sum_{v\in S}w(v)\leq2k$ in polynomial time.
\end{THM}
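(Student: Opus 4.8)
The plan is to reconstruct the classical \emph{local ratio} argument behind this $2$-approximation, in the form due to Bafna, Berman, and Fujito. Write $n=\abs{V(G)}$, $m=\abs{E(G)}$, and let $c$ be the number of connected components of $G$. First note that the decision statement follows from a pure approximation guarantee: run the approximation algorithm to obtain a feedback vertex set $S$ with $\sum_{v\in S}w(v)\le 2\cdot\mathrm{OPT}$, where $\mathrm{OPT}$ is the minimum weight of a feedback vertex set; if $\sum_{v\in S}w(v)\le 2k$ output $S$, and otherwise $\mathrm{OPT}>k$, so we may report that no feedback vertex set of weight at most $k$ exists. So it suffices to produce, in polynomial time, a feedback vertex set of weight at most $2\cdot\mathrm{OPT}$.

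I would use the local ratio principle: if a nonnegative weight function splits as $w=w_{1}+w_{2}$ with $w_{1},w_{2}\ge 0$ and a single feedback vertex set $S$ is within a factor $2$ of the optimum both for $w_{1}$ and for $w_{2}$, then $S$ is within a factor $2$ of the optimum for $w$. The algorithm then recurses: on a positively weighted \emph{clean} instance it peels off a nonnegative $w_{1}\le w$ for which a minimal feedback vertex set is a $2$-approximation, recurses on $w-w_{1}$, and shrinks the returned set to a minimal feedback vertex set on the way back up. To reach a clean instance I would repeatedly delete vertices of degree at most $1$ (which lie on no cycle) and move vertices of weight $0$ into the solution and delete them; after this $G$ has minimum degree at least $2$ and all weights are positive. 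On such an instance I would split into the two cases of \cite{BBF1999}.

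\textbf{Case 1: $G$ has a semidisjoint cycle $C$}, that is, a cycle all of whose vertices have degree $2$ in $G$ with at most one exception. Put $\epsilon=\min_{u\in C}w(u)$ and let $w_{1}(v)=\epsilon$ for $v\in C$ and $w_{1}(v)=0$ otherwise. Every feedback vertex set meets $C$, hence has $w_{1}$-cost at least $\epsilon$, which is attained; conversely, because each degree-$2$ vertex of $C$ lies on no cycle of $G$ other than $C$, a minimal feedback vertex set meets $C$ in at most one vertex and so has $w_{1}$-cost at most $\epsilon$. This is a ratio-$1$ step, and after replacing $w$ by $w-w_{1}$ some vertex of $C$ has weight $0$, so the next cleaning removes a vertex. \textbf{Case 2: $G$ has minimum degree at least $2$ and no semidisjoint cycle}. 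Then $d_{G}(v)-1\ge 1$ for every $v$, so I set $\gamma=\min_{v\in V(G)}\frac{w(v)}{d_{G}(v)-1}$ and $w_{1}(v)=\gamma\,(d_{G}(v)-1)$; by the choice of $\gamma$ we have $0\le w_{1}\le w$, and after replacing $w$ by $w-w_{1}$ a vertex attaining the minimum has weight $0$, so progress is again made.

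The core of the proof is the combinatorial lemma underlying Case 2: if $G$ has minimum degree at least $2$ and no semidisjoint cycle, then every minimal feedback vertex set $F$ satisfies $\sum_{v\in F}(d_{G}(v)-1)\le 2(m-n+c)$. On the other hand, since $G\setminus F$ is a forest, counting the edges of $G$ incident with $F$ (at most $\sum_{v\in F}d_{G}(v)$) against the edges of $G\setminus F$ gives $\sum_{v\in F}(d_{G}(v)-1)\ge m-n+c$ for \emph{every} feedback vertex set $F$; together with the lemma this shows that a minimal feedback vertex set is within a factor $2$ of the $w_{1}$-optimum, which is exactly what the local ratio step needs. I expect the upper bound in that lemma to be the main obstacle: one organizes $G\setminus F$ as a forest, charges each vertex of $F$ to the edges it destroys, and uses the minimality of $F$ to rule out redundant vertices, and it is precisely the absence of semidisjoint cycles---forcing $G$ to contain enough high-degree ``branching'' vertices---that prevents this count from degrading beyond the factor $2$. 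Finally, each recursive call removes at least one vertex in the subsequent cleaning and performs only polynomial work (a semidisjoint cycle, when one exists, and a minimal feedback vertex set are both found in polynomial time), so the algorithm runs in polynomial time overall.
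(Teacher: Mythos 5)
This theorem is not proved in the paper at all: it is imported verbatim from Bafna, Berman, and Fujito and used as a black box, so there is no internal proof to compare against. Judged on its own, your reconstruction has the right architecture --- the reduction of the decision version to a $2$-approximation guarantee, the local ratio recursion with cleaning (deleting degree-at-most-$1$ vertices and absorbing weight-$0$ vertices), the unit-weight step on a semidisjoint cycle, the degree-proportional weights $w_1(v)=\gamma\,(d_G(v)-1)$ otherwise, and the termination argument are all faithful to the actual BBF/local-ratio proof, and your justification that a minimal feedback vertex set meets a semidisjoint cycle in at most one vertex is correct.

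However, as a proof the proposal has a genuine gap exactly where you flag one: the combinatorial lemma that in a graph of minimum degree at least $2$ with no semidisjoint cycle every \emph{minimal} feedback vertex set $F$ satisfies $\sum_{v\in F}(d_G(v)-1)\le 2(m-n+c)$ is asserted, not proved. This inequality is the entire content of the Bafna--Berman--Fujito theorem; the surrounding machinery is routine, and the factor $2$ lives only in this lemma. The one-sentence sketch you give (``charge each vertex of $F$ to the edges it destroys and use minimality'') does not yet explain how the absence of semidisjoint cycles enters, and the actual argument is delicate: one must partition $F$ according to how the components of the forest $G\setminus F$ attach to it, use minimality to certify for each $v\in F$ a cycle avoiding $F\setminus\{v\}$, and use the no-semidisjoint-cycle hypothesis to bound the number of low-degree vertices of $F$ against the branching vertices of $G\setminus F$. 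A second, minor imprecision: your edge-counting derivation of the lower bound gives $\sum_{v\in F}(d_G(v)-1)\ge m-n+c_F$ with $c_F$ the number of tree components of $G\setminus F$, which need not dominate $c$; the clean way to get $m-n+c$ is the cycle-rank argument (deleting a vertex of degree $d$ decreases $m-n+c$ by at most $d-1$). Neither issue affects the overall plan, but without a proof of the key lemma the proposal is an outline of the known proof rather than a proof.
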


Using this approximation algorithm, we can design a $7$-approximation algorithm for the \LPD\ problem.

\begin{LEM}\label{lem:modulator}
	For an instance $(G,k)$ of the \LPD\ problem with $k>0$, one can either confirm that $G$ has no modulator of size at most $k$, or find a modulator of~$G$ having at most $7k$ vertices in polynomial time.
\end{LEM}

To prove Lemma~\ref{lem:modulator}, we will use the following two lemmas.
Observe that maximal true twin-sets form a partition of the vertex set.

\begin{LEM}[Dom, Guo, H\"{u}ffner, and Niedermeier {\cite[Lemma~2]{DGHL2006}}]\label{lem:always small obs}
	Let $G$ be a graph, and $H$ be an induced subgraph of~$G$ obtained by taking exactly one vertex from each maximal true twin-set in~$G$.
	Then $H$ is triangle-free if and only if $G$ is (bull, dart, gem)-free.
\end{LEM}

\begin{LEM}\label{lem:minimal modulator}
	If a graph $G$ has a modulator $S$ and a true twin-set $X$ such that $X\setminus S$ is non-empty, then $S\setminus X$ is a modulator of $G$.
\end{LEM}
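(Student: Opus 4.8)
The plan is to work with the obstruction characterization of $3$-leaf powers (Theorem~\ref{thm:obstructions for 3-leaf powers}) and observation (O\ref{O1}). Set $G':=G\setminus(S\setminus X)$, whose vertex set is $(V(G)\setminus S)\cup(X\cap S)$; the goal is exactly to show $G'$ is a $3$-leaf power. Since $X\setminus S\neq\emptyset$, I would fix a vertex $v\in X\setminus S$, so that $v\in V(G')\setminus S$. Because $X$ is a true twin-set of $G$, the set $T:=(X\cap S)\cup\{v\}$ is a clique of pairwise true twins in $G$, and as $T\subseteq V(G')$, these vertices remain pairwise true twins in the induced subgraph $G'$.

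Next I would argue by contradiction: suppose $G'$ is not a $3$-leaf power, so by Theorem~\ref{thm:obstructions for 3-leaf powers} it contains an induced subgraph $H$ that is an obstruction (a hole, bull, dart, or gem). The crux is that, by (O\ref{O1}), no obstruction has true twins, so $H$ meets the twin-set $T$ in at most one vertex; in particular $\abs{V(H)\cap(X\cap S)}\leq 1$. This is the only real idea in the proof; everything else is bookkeeping.

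To finish, I would split into two cases. If $V(H)\cap(X\cap S)=\emptyset$, then $V(H)\subseteq V(G')\setminus(X\cap S)=V(G)\setminus S$, so $H$ is already an induced subgraph of $G\setminus S$, contradicting that $S$ is a modulator. Otherwise $V(H)\cap(X\cap S)=\{u\}$ for a single vertex $u$, and $v\notin V(H)$; since $u$ and $v$ are true twins in $G'$, the transposition swapping $u$ and $v$ is an automorphism of $G'$, and applying it to $H$ gives an isomorphic obstruction $H'$ with $V(H')=(V(H)\setminus\{u\})\cup\{v\}\subseteq V(G)\setminus S$ (using that $V(H)\setminus\{u\}$ avoids $T$, hence avoids $X\cap S$, and that $v\notin S$). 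Then $H'$ is an induced subgraph of $G\setminus S$, again a contradiction. Hence $G'$ has no obstruction, so $G'$ is a $3$-leaf power and $S\setminus X$ is a modulator of $G$. I do not anticipate a genuine obstacle here; the only point needing a little care is checking that the induced subgraphs $H$ and $H'$ of $G'$ are genuinely induced subgraphs of $G\setminus S$, which follows since $G\setminus S=G[V(G)\setminus S]$ and $G'=G[V(G')]$ with $V(H),V(H')\subseteq V(G)\setminus S\subseteq V(G')$.
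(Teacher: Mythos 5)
Your proof is correct and follows essentially the same route as the paper: use (O\ref{O1}) to conclude that an obstruction in $G\setminus(S\setminus X)$ meets $X\cap S$ in at most one vertex, then replace that vertex by a true twin from $X\setminus S$ to exhibit an obstruction inside $G\setminus S$. The only difference is presentational (you split off the case $V(H)\cap(X\cap S)=\emptyset$ explicitly, which the paper handles in one line).
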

\begin{proof}
	We may assume that $S\cap X$ is non-empty.
	Suppose for contradiction that $G\setminus(S\setminus X)$ has an obstruction $H$.
	Since $S$ is a modulator of $G$, at least one vertex of $H$ is in $S$.
	Since $H$ is an induced subgraph of $G\setminus(S\setminus X)$, no vertices of $H$ is in $S\setminus X$, and therefore at least one vertex of $H$ is in $S\cap X$.
	Since $X$ is a true twin-set, by (\ref{O1}), $H$ has exactly one vertex~$v$ contained in $S\cap X$.
	Since $X\setminus S$ is non-empty, we can choose a vertex~$w$.
	Since $v$ and~$w$ are twins, $H':=G[(V(H)\setminus\{v\})\cup\{w\}]$ is isomorphic to~$H$.
	However, $H'$ is an induced subgraph of $G\setminus S$, contradicting the assumption that~$S$ is a modulator of $G$.
\end{proof}

We now prove Lemma~\ref{lem:modulator}.

\begin{proof}[Proof of Lemma~\ref{lem:modulator}]
	We can find a maximal packing $H_1,\ldots,H_m$ of vertex-disjoint small obstructions in $G$ in time $O(\abs{V(G)}^6)$.
	If $m\geq k+1$, then we confirm that $G$ has no modulators of size at most $k$.
	Thus, we may assume that $m\leq k$.
	Let $X:=\bigcup_{i\in\{1,\ldots,m\}}V(H_i)$.
	Note that $\abs{X}\leq5k$ and $G\setminus X$ has no small obstructions.
	By Theorem~\ref{thm:obs 3-leaf powers}, a set $S\subseteq V(G)$ is a modulator of $G\setminus X$ if and only if $G\setminus(S\cup X)$ is chordal.
	
	Let $H$ be an induced subgraph of $G\setminus X$ obtained by taking exactly one vertex from each maximal true twin-set in $G\setminus X$.
	Let $\sim$ be a binary relation of $V(G)\setminus X$ such that for vertices $u,v\in V(G)\setminus X$, $u\sim v$ if and only if $u=v$ or $u$ and $v$ are true twins in $G\setminus X$.
	It is easily deduced that $\sim$ is an equivalence relation.
	Thus, the set of maximal true twin-sets in $G\setminus X$ forms a partition of $V(G)\setminus X$.
	For every vertex $v\in V(G)\setminus X$, let $B_v$ be the maximal true twin-set in $G\setminus X$ containing $v$, and let $w(v):=\abs{B_v}$ for $v\in V(H)$.
	
	We claim that $(G\setminus X,k)$ is a yes-instance of the \LPD\ problem if and only if $(H,w,k)$ is a yes-instance of \textsc{Weighted Feedback Vertex Set}.
	We first show that if $G\setminus X$ has a minimal modulator~$S$ of size at most $k$, then $S\cap V(H)$ is a feedback vertex set of $H$ with $\sum_{v\in S\cap V(H)}w(v)\leq k$.
	Since~$S$ is a modulator of $G\setminus X$, the graph $G\setminus(X\cup S)$ is chordal.
	Since~$H\setminus S$ is an induced subgraph of $G\setminus(X\cup S)$, the graph~$H\setminus S$ is also chordal.
	Since $G\setminus X$ has no small obstructions, by Lemma~\ref{lem:always small obs}, $H$ is triangle-free, and so is $H\setminus S$.
	By the definition of modulators, $H\setminus S$ is hole-free.
	Therefore, $H\setminus S$ is acyclic, that is, $S\cap V(H)$ is a feedback vertex set of~$H$.
	By Lemma~\ref{lem:minimal modulator}, $B_v\subseteq S$ for every vertex $v\in S$.
	Hence, $\abs{S}=\sum_{v\in S\cap V(H)}w(v)\leq k$.
	
	Conversely, we show that if $H$ has a feedback vertex set $T$ with the sum $\sum_{v\in T}w(v)$ at most~$k$, then $B_T:=\bigcup_{v\in T}B_v$ is a modulator of $G\setminus X$ having at most~$k$ vertices.
	Note that $\abs{B_T}=\sum_{v\in T}w(v)\leq k$.
	Suppose for contradiction that $B_T$ is not a modulator of $G\setminus X$.
	Since $G\setminus X$ has no small obstruction, $G\setminus(X\cup B_T)$ has a hole $H'$ of length at least $6$.
	For each $v\in V(H')$, the set~$B_v$ is a maximal true twin-set, and therefore
	\[
		\left(\bigcup_{v\in V(H')}B_v\right)\cap V(H)
	\]
	is a hole of $H\setminus T$, contradicting the assumption that $T$ is a feedback vertex set of $H$.
	This proves the claim.
	 
	We apply Theorem~\ref{BBF1999} for $H$, $w$, and $k$.
	If the algorithm of Theorem~\ref{BBF1999} confirms that $H$ has no feedback vertex set $T$ with $\sum_{v\in T}w(v)\leq k$, then we confirm that $G$ has no modulator of size at most $k$.	
	Thus, we may assume that the algorithm of Theorem~\ref{BBF1999} outputs a set $T\subseteq V(H)$ with $\sum_{v\in T}w(v)\leq2k$.
	Note that $\abs{B_T}=\sum_{v\in T}w(v)\leq2k$.
	Then $X\cup B_T$ is a modulator of $G$ with $\abs{X\cup B_T}=\abs{X}+\abs{B_T}\leq5k+2k$.
\end{proof}

With a modulator of size $O(k)$ at hand, we are ready to find a small $1$-redundant modulator.
We note that, in principle, a small $1$-redundant modulator might not exist, but if that is the case, we are able to identify a vertex that has to be in every modulator of size at most $k$, so that we can remove it from the input graph.
To find such a vertex, we will use the following lemma, which we slightly rephrase for our application.

\begin{LEM}[Jansen and Pilipczuk {\cite[Lemma~1.3]{JP2018}}]\label{lem:EPflower}
	For a graph $G$, a non-negative integer $k$, and a vertex $v$ of $G$, if $G\setminus v$ is chordal, then one can in polynomial time either find  holes $H_1,\ldots,H_{k+1}$ in $G$ such that $V(H_i)\cap V(H_j)=\{v\}$ for all distinct $i,j\in\{1,\ldots,k+1\}$, or find a set $S\subseteq V(G)\setminus\{v\}$ of size at most $12k$ such that $G\setminus S$ is chordal.
\end{LEM}

We now present an algorithm for finding a $1$-redundant modulator of small size.

\begin{PROP}\label{prop:redundant}
	There exists a polynomial-time algorithm that for every instance $(G,k)$ of the \LPD\ problem with $k>0$, outputs an equivalent instance $(G',k')$ and a $1$-redundant modulator $S$ of $G'$ such that $\abs{V(G')}\leq\abs{V(G)}$, $k'\leq k$, and $\abs{S}\leq84k^2+7k$.
\end{PROP}
\begin{proof}
	We first try to find a modulator $S$ of $G$ having size at most $7k$ by using Lemma~\ref{lem:modulator}.
	If it fails, then $(G,k)$ is a no-instance, and therefore we output $(K_{2,2},0)$ as $(G',k')$ and $V(K_{2,2})$ as a $1$-redundant modulator of $G'$.
	Otherwise, for each vertex $v$ in $S$, let $G_v:=G\setminus(S\setminus\{v\})$.
	Note that every obstruction in~$G_v$ contains $v$.
	Let $F^v_1,\ldots,F^v_{m(v)}$ be a maximal packing of small obstructions in $G_v$ such that $V(F^v_i)\cap V(F^v_j)=\{v\}$ for all distinct $i,j\in\{1,\ldots,m(v)\}$.
	We now let
	\[
		G'_v:=G_v\setminus((V(F^v_1)\cup\cdots\cup V(F^v_{m(v)}))\setminus\{v\}).
	\]
	
	If $m(v)\geq k+1$ for some vertex $v\in S$, then every modulator of $G$ having size at most $k$ must contain $v$, because otherwise it contains at least one vertex in $V(F^v_i)\setminus\{v\}$ for each $i\in\{1,\ldots,m(v)\}$, which are pairwise disjoint, contradicting that the modulator has size at most $k$.
	Thus, in this case, $(G,k)$ is equivalent to $(G\setminus v,k-1)$, so that we can apply our algorithm recursively for $(G\setminus v,k-1)$.
	Therefore, we may assume that $m(v)\leq k$ for every vertex $v\in S$.
	
	By applying Lemma~\ref{lem:EPflower} for $G'_v$, $k-m(v)$, and $v$, we can either
	\begin{enumerate}
		\item [(i)] find $k-m(v)+1$ holes $H^v_1,\ldots,H^v_{k-m(v)+1}$ in $G'_v$ such that $V(H^v_i)\cap V(H^v_j)=\{v\}$ for all distinct $i,j\in\{1,\ldots,k-m(v)+1\}$, or
		\item [(ii)] find a set $S'_v\subseteq V(G'_v)\setminus\{v\}$ of size at most $12(k-m(v))$ such that $G'_v\setminus S'_v$ is chordal.
	\end{enumerate}
	
	If (i) holds, then we apply our algorithm recursively for $(G\setminus v,k-1)$,
	because if a modulator does not contain $v$, then 
	it contains at least one vertex in $V(F^v_i)\setminus\{v\}$ for each $i\in\{1,\ldots,m(v)\}$
	and at least one vertex in $V(H^v_j)\setminus\{v\}$ for each $j\in\{1,\ldots,k-m(v)+1\}$.
	Therefore, we may assume that (ii) holds for every vertex $v\in S$.
	Let
	\[
		S_v:=(V(F^v_1)\cup\cdots\cup V(F^v_{m(v)})\cup S'_v)\setminus\{v\}.
	\]
	Note that $\abs{S_v}\leq4m(v)+12(k-m(v))\le 12k$ and $G_v\setminus S_v$ is a $3$-leaf power.
	
	We output $(G,k)$ as $(G',k')$ and $X:=S\cup \bigcup_{v\in S}S_v$ as a $1$-redundant modulator of $G$.
	Clearly, $\abs{X}\leq\abs{S}+12k\abs{S}\leq84k^2+7k$.
	It remains to argue that~$X$ is indeed a $1$-redundant modulator of $G$.
	Let $H$ be an arbitrary obstruction in $G$.
	It suffices to show that $\abs{V(H)\cap X}\geq2$.
	Since~$S$ is a modulator of $G$, we have that $H$ has a vertex $v\in S$.
	If $\abs{V(H)\cap S}=1$, then~$H$ is an induced subgraph of $G_v$, and therefore $H$ has at least one vertex in~$S_v$.
	Since $S_v$ and $S$ are disjoint,  $\abs{V(H)\cap X}\geq2$, and therefore $X$ is a $1$-redundant modulator of~$G$.
\end{proof}

\section{Safeness of the reduction rules}\label{sec:safeness}

We now aim to show the safeness of our reduction rules.
In Subsection~\ref{subsec:technical}, we will introduce the notion of \emph{complete split} of a graph and present four technical lemmas.
In Subsection~\ref{subsec:safeness12}, we will show the safeness of (\ref{Rnumberfirst}), \ldots, (\ref{Rbagfirst}).
In Subsections~\ref{subsec:safeness6} and~\ref{subsec:safeness7}, we will show the safeness of (\ref{Rbagsecond}) and (\ref{Rbagthird}), respectively.

\subsection{Complete splits and blocking pairs}\label{subsec:technical}

Cunningham \cite{Cunningham1982} introduced a split of a graph.
A \emph{split} of a graph $G$ is a partition $(A,B)$ of $V(G)$ such that $\min\{\abs{A},\abs{B}\}\geq2$ and $N(A)$ is complete to $N(B)$.
We say that a split $(A,B)$ of $G$ is \emph{complete} if $N(A)\cup N(B)$ is a clique.
If a graph has a complete split, then obstructions must satisfy some conditions which we prove in the following two lemmas.

\begin{LEM}\label{lem:no holes between}
	Let $G$ be a graph and $(A,B)$ be a complete split of $G$.
	If $G$ has a hole~$H$, then $V(H)\cap A=\emptyset$ or $V(H)\cap B=\emptyset$.
\end{LEM}
\begin{proof}
	Suppose not.
	Since $N(A)\cup N(B)$ is a clique, $H$ has at most two vertices in $N(A)\cup N(B)$, because otherwise $H$ has $K_3$ as a subgraph.
	Since both $V(H)\cap A$ and $V(H)\cap B$ are non-empty and $H$ is connected, $H$ has vertices $x_1\in N(A)$ and $x_2\in N(B)$.
	Thus, $H$ has exactly two vertices $x_1$ and~$x_2$ in $N(A)\cup N(B)$, so that $H\setminus x_1x_2$ is disconnected, a contradiction.
\end{proof}

\begin{LEM}\label{lem:obs between}
	Let $G$ be a graph and $(A,B)$ be a complete split of $G$.
	If $G$ has an obstruction $H$ having exactly two vertices in $A$, then $H$ is isomorphic to the bull.
\end{LEM}
\begin{proof}
	We denote by~$a_1$ and~$a_2$ the vertices in $V(H)\cap A$.
	Note that~$H$ has at least three vertices in~$B$.
	Suppose for contradiction that both $a_1$ and $a_2$ have neighbors in~$B$.
	Since $N(A)\cup N(B)$ is a clique, $a_1$ and $a_2$ are adjacent and have the same set of neighbors in~$B$.
	Then $a_1$ and $a_2$ are true twins in~$H$, contradicting (\ref{O1}).
	
	Therefore, either $a_1$ or $a_2$ has no neighbors in $B$.
	By symmetry, we may assume that $a_1$ has no neighbors in $B$.
	Since $H$ is connected, $a_1$ is adjacent to $a_2$.
	Thus, $a_1$ has degree $1$ in~$H$.
	Since $N(A)\cup N(B)$ is a clique, by~(\ref{O3}),~$a_2$ has at most three neighbors in~$H$.
	Then by (\ref{O6}), $a_2$ has degree~$3$ in~$H$.
	Hence, the bull is the only possible obstruction for $H$.
\end{proof}

Recall that a \emph{blocking pair} for $X\subseteq V(G)$ is an unordered pair $\{v,w\}$ of distinct vertices in $N(X)$ such that if $v$ and $w$ are adjacent and $N(v)\cap X=N(w)\cap X$, then $N(v)\cap X$ is not a clique.
The following lemma motivates this definition.

\begin{LEM}\label{lem:blocking pair}
	Let $G$ be a graph and $(A,B)$ be a partition of $V(G)$ such that $\min\{\abs{A},\abs{B}\}\geq2$.
	Then $(A,B)$ is a complete split of $G$ if and only if $N(B)$ is a clique and $B$ has no blocking pairs for $A$.
\end{LEM}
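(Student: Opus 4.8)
The statement is a biconditional, so the plan is to prove the two directions separately. For the forward direction, suppose $(A,B)$ is a complete split of $G$. By definition $N(A)\cup N(B)$ is a clique, so in particular $N(B)$ is a clique. To see that $B$ has no blocking pair for $A$, take any two distinct vertices $v,w\in N(A)$; since $N(A)\subseteq N(A)\cup N(B)$ which is a clique, $v$ and $w$ are adjacent, and since $N(A)$ is complete to $N(B)$ and $v,w\in N(A)$, both $v$ and $w$ are adjacent to every vertex of $N(B)$. I would then argue that $N(v)\cap A$ is a clique: every vertex of $N(v)\cap A$ lies in $N(B)$ (as it has a neighbor, namely $v$, outside $A$... wait, $v\in N(A)$ so $v\notin A$ only if $v\in B$; more carefully, $N(v)\cap A\subseteq A\cap N(\{v\})$, and if $v\in B$ then $N(v)\cap A\subseteq N(B)$ which is a clique, while if $v\notin B$ then $v\in N(A)\subseteq N(B)$, and then $N(v)\cap A\subseteq A$, but these vertices have $v\in N(B)$ as a neighbor so they lie in $N(B)$, again a clique). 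Hence the condition defining ``blocking pair'' fails: $v,w$ are adjacent, but whether or not $N(v)\cap A=N(w)\cap A$, the set $N(v)\cap A$ is a clique, so $\{v,w\}$ is not a blocking pair for $A$.

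For the reverse direction, assume $N(B)$ is a clique and $B$ has no blocking pair for $A$; I must show $N(A)$ is complete to $N(B)$ and that $N(A)\cup N(B)$ is a clique. I would first handle the edge case $N(A)=\emptyset$ or a similar degeneracy, but the substantive argument is: take $v\in N(A)$ and $u\in N(B)$; I want $v$ adjacent to $u$. Here the key is to use vertices of $A$ and $B$ to manufacture a blocking pair unless the desired adjacencies hold. Concretely, pick any $a\in A$ adjacent to $v$ (exists since $v\in N(A)$) — note $v\in N(A)$ means $v\notin A$, so $v\in B$, hence $v\in N(B)$ too wait that's not right either: $v\in N(A)$ just means $v\notin A$ and $v$ has a neighbor in $A$, so $v\in B$. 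So actually $N(A)\subseteq B$ and $N(B)\subseteq A$. Then $N(A)$ is a subset of $B$ and I want it to be complete to $N(B)\subseteq A$. Given $v\in N(A)\subseteq B$ and $u\in N(B)\subseteq A$: if $v,u$ non-adjacent, I'd look for a second vertex to pair with... Actually the cleanest route: show $N(B)$ being a clique forces $N(B)\subseteq$ (something), and use the no-blocking-pair hypothesis applied to well-chosen pairs inside $B\cap N(A)$ to force each such vertex to be complete to $N(B)$. I expect the argument runs: if $v,v'\in N(A)$ with $N(v)\cap A\ne N(v')\cap A$ then $\{v,v'\}$ is automatically a blocking pair (the ``if'' clause is vacuous), contradiction — so all vertices of $N(A)$ have the same trace on $A$, call it $A'$; then $A'\supseteq N(B)$ follows, and if $A'=N(v)\cap A$ is not a clique then again $\{v,v'\}$ is a blocking pair — so $A'$ is a clique; finally $N(A)$ is complete to $A'\supseteq N(B)$ by construction, and $N(A)\cup N(B)\subseteq$ (clique $N(B)$) $\cup A'$ with the needed edges present, giving a clique.

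The main obstacle I anticipate is the bookkeeping in the reverse direction around the degenerate configurations and making sure the ``if $v$ and $w$ are adjacent and $N(v)\cap X=N(w)\cap X$'' guard in the definition of blocking pair is handled on both horns — one must carefully split into the case where the two chosen neighbor-vertices have equal traces on $A$ (where non-adjacency or non-clique-ness of the trace must be exploited) and the case of unequal traces (where the pair is a blocking pair for free, contradicting the hypothesis). Establishing that $N(A)\subseteq B$ and $N(B)\subseteq A$ at the outset, which uses $\abs{A},\abs{B}\geq 2$ only implicitly, should streamline this; the inequalities $\abs{A}\geq 2$, $\abs{B}\geq 2$ are needed to match the definition of a split but play no role in the adjacency analysis itself, so I would mention them only to invoke the definition.
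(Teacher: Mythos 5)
Your reverse direction is essentially the paper's argument: after disposing of the degenerate case $\abs{N(A)}\leq 1$, a non-edge in $N(A)$ or a pair of vertices of $N(A)$ with different traces on $A$ would each be a blocking pair (the guard clause being vacuous), so $N(A)$ is a clique all of whose vertices have the same trace $A'$ on $A$; since $A'=N(B)$, this gives that $N(A)$ is complete to $N(B)$ and that $N(A)\cup N(B)$ is a clique. That part is fine.

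The forward direction, however, contains a genuine logical error. You write that $\{v,w\}$ is not a blocking pair because $v,w$ are adjacent and $N(v)\cap A$ is a clique, ``whether or not $N(v)\cap A=N(w)\cap A$.'' This contradicts the definition you use correctly elsewhere: $\{v,w\}$ \emph{is} a blocking pair whenever the antecedent ``$v$ and $w$ are adjacent and $N(v)\cap A=N(w)\cap A$'' fails, so a pair with unequal traces is a blocking pair for free. To show $\{v,w\}$ is \emph{not} a blocking pair you must verify all three facts: adjacency, equality of traces, and that the common trace is a clique. The missing equality does hold in a complete split, and you already have both halves of it on the page: $N(v)\cap A\subseteq N(B)$ (every neighbor of $v\in N(A)\subseteq B$ lying in $A$ is in $N(B)$) and $N(v)\cap A\supseteq N(B)$ (since $N(A)$ is complete to $N(B)$), whence $N(v)\cap A=N(B)=N(w)\cap A$. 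Add that one line and the forward direction closes; the paper treats this direction as immediate, and with the correction your proof coincides with it.
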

\begin{proof}
	It is clear that if $(A,B)$ is a complete split of $G$, then $N(B)$ is a clique and $B$ has no blocking pairs for $A$.
	
	Conversely, suppose that $N(B)$ is a clique and $B$ has no blocking pairs for $A$.
	We may assume that $\abs{N(A)}\geq2$, because otherwise $N(A)\cup N(B)$ is a clique and $(A,B)$ is a complete split of $G$.
	Since $B$ has no blocking pairs for $A$, we have that $N(A)$ is a clique, because if $N(A)$ has a non-edge $vw$, then $\{v,w\}$ is a blocking pair for $A$.
	Moreover, $N(v)\cap A=N(w)\cap A$ for each pair of distinct vertices $v$ and $w$ in $N(A)$, because otherwise $\{v,w\}$ is a blocking pair for $A$.
	This means that $N(A)$ is complete to $N(B)$.
	Thus, $N(A)\cup N(B)$ is a clique, and therefore $(A,B)$ is a complete split of~$G$.
\end{proof}

The following lemma shows that if there is a blocking pair $\{v,w\}$ for a set $X\subseteq V(G)$ such that $G[X]$ has two distinct components whose vertex sets are blocked by $\{v,w\}$, then $G$ is not a $3$-leaf power.
\begin{LEM}\label{lem:obs from blocking}
	Let $G$ be a graph and $(A,B)$ be a partition of $V(G)$ such that $\min\{\abs{A},\abs{B}\}\geq2$.
	If $G[A]$ has distinct components $C_1$ and $C_2$ such that both $V(C_1)$ and $V(C_2)$ are blocked by a pair $\{v,w\}$ of vertices in $B$, then $G[V(C_1)\cup V(C_2)\cup\{v,w\}]$ is not a $3$-leaf power.
\end{LEM}
\begin{proof}
	Suppose first that $N_G(v)\cap V(C_i)\neq N_G(w)\cap V(C_i)$ for some $i\in\{1,2\}$.
	By symmetry, we may assume that $N_G(v)\cap V(C_1)\neq N_G(w)\cap V(C_1)$.
	Then~$C_1$ has a vertex $u_1$ adjacent to exactly one of $v$ and $w$.
	By symmetry, we may assume that $u_1$ is adjacent to $v$ and non-adjacent to $w$.
	Since $\{v,w\}$ is a blocking pair for $V(C_2)$, there exist a neighbor $u_2$ of $v$ in $C_2$.
	Since $G[V(C_1)\cup V(C_2)\cup\{w\}]$ is connected, we can find an induced path $P$ in the graph between~$u_1$ and~$u_2$.
	Note that the length of $P$ is at least $3$, because~$P$ must intersect $w$ that is non-adjacent to~$u_1$.
	Since $v$ is adjacent to both ends of $P$, by Lemma~\ref{lem:not DH}, $G[V(P)\cup\{v\}]$ is not distance-hereditary.
	Therefore, $G[V(C_1)\cup V(C_2)\cup\{v,w\}]$ is not a $3$-leaf power.
	
	We now suppose that $N_G(v)\cap V(C_i)=N_G(w)\cap V(C_i)$ for $i=1,2$.
	If~$v$ and~$w$ are non-adjacent, then for neighbors $u_1,u_2$ of $v$ with $u_1\in V(C_1)$ and $u_2\in V(C_2)$, the graph $G[\{v,w,u_1,u_2\}]$ is a hole.
	Thus, we may assume that~$v$ and~$w$ are adjacent.
	Since $\{v,w\}$ is a blocking pair for $V(C_1)$ and $N_G(v)\cap V(C_1)=N_G(w)\cap V(C_1)$, there exists a non-edge $u_1u_2$ in $N_G(v)\cap V(C_1)$.
	Let $P$ be an induced path in $C_1$ between $u_1$ and $u_2$.
	Since $v$ is adjacent to both ends of $P$, by Lemma~\ref{lem:not DH}, we may assume that the length of $P$ is exactly $2$.
	Let $u_3$ be the internal vertex of $P$, and $u_4$ be a neighbor of~$v$ in $V(C_2)$.
	Then $G[\{v,u_1,u_2,u_3,u_4\}]$ is isomorphic to the dart if $u_3$ is adjacent to $v$, and has a hole of length $4$ if $u_3$ is non-adjacent to $v$.
	Therefore, $G[V(C_1)\cup V(C_2)\cup\{v,w\}]$ is not a $3$-leaf power.
\end{proof}

\subsection{(\ref{Rnumberfirst}), \ldots, (\ref{Rbagfirst}) are safe}\label{subsec:safeness12}

We first show the safeness of (\ref{Rnumberfirst}).

\begin{RULE}[\ref{Rnumberfirst}]
	Let $S^+$ be the set of vertices $u\in S$ such that for each component $C$ of~$G\setminus S$, the set $N_G(u)\cap V(C)$ is a true twin-set in $C$.
	Let $X$ be the set of $2$-element subsets of~$S^+$ and~$Y$ be the set of non-trivial components of $G\setminus S$ having no neighbors of $S\setminus S^+$.
	Let $Q$ be a bipartite graph on $(X\times\{1,2,3\},Y)$ satisfying all of the following conditions.
	\begin{enumerate}[label=(\arabic*)]
		\item $(\{v,w\},1)\in X\times\{1\}$ and $C\in Y$ are adjacent in $Q$ if and only if~$V(C)$ is blocked by $\{v,w\}$.
		\item $(\{v,w\},2)\in X\times\{2\}$ and $C\in Y$ are adjacent in $Q$ if and only if~$C$ has a vertex adjacent to both $v$ and $w$.
		\item $(\{v,w\},3)\in X\times\{3\}$ and $C\in Y$ are adjacent in $Q$ if and only if~$C$ has adjacent vertices $x$ and $y$ such that both $vx$ and $wx$ are edges of $G$, and both $vy$ and $wy$ are non-edges of $G$.
	\end{enumerate}
	If $Q$ has a maximal $(X\times\{1,2,3\},k+2)$-matching $M$ avoiding some $U\in Y$, then replace $(G,k)$ with $(G\setminus E(U),k)$.
\end{RULE}
\begin{proof}[Proof of Safeness]
	Let $G':=G\setminus E(U)$.
	We first show that if $(G,k)$ is a yes-instance, then so is $(G',k)$.
	Suppose that $G$ has a modulator $S'$ of size at most $k$ and $G'\setminus S'$ has an obstruction $H$.
	Since $G\setminus S'$ is a $3$-leaf power, $H$ has vertices $b_1$ and $b_2$ with $b_1b_2\in E(U\setminus S')$.
	Thus, $\abs{V(U)\setminus S'}\geq2$.
	
	\begin{CLM}\label{clm:Rnumberfirst-1}
		$(V(U)\setminus S',V(G)\setminus(V(U)\cup S'))$ is a split of $G'\setminus S'$.
	\end{CLM}
	\begin{subproof}[Proof of Claim~\ref{clm:Rnumberfirst-1}]
		We first show that $\abs{V(G)\setminus(V(U)\cup S')}\geq2$.
		Since $V(U)\setminus S'$ is an independent set of $G'\setminus S'$, if $H$ is a hole, then $H$ has at most $\lfloor\abs{V(H)}/2\rfloor$ vertices in $V(U)\setminus S'$, and therefore $H$ has at least $\lceil\abs{V(H)}/2\rceil\geq2$ vertices in $V(G)\setminus(V(U)\cup S')$.
		Otherwise, $H$ is a small obstruction, so by (\ref{O2}), $H$ has at most three vertices in $V(U)\setminus S'$, and therefore $H$ has at least two vertices in $V(G)\setminus(V(U)\cup S')$.
		Hence, $\abs{V(G)\setminus(V(U)\cup S')}\geq2$.
		
		Suppose for contradiction that $(V(U)\setminus S',V(G)\setminus(V(U)\cup S'))$ is not a split of $G'\setminus S'$.
		Then $V(G)\setminus(V(U)\cup S')$ contains vertices $v$ and $w$ such that both $v$ and $w$ have neighbors in $V(U)\setminus S'$ and $N_G(v)\cap(V(U)\setminus S')\neq N_G(w)\cap(V(U)\setminus S')$.
		Thus, $\{v,w\}$ is a blocking pair for $V(U)\setminus S'$, so for $V(U)$.
		Then $U$ is adjacent to $(\{v,w\},1)$ in $Q$.
		Since $M$ is maximal, $Y$ has distinct elements $C_1,\ldots,C_{k+2}$ different from $U$ such that $V(C_i)$ is blocked by $\{v,w\}$ for every $i\in\{1,\ldots,k+2\}$.
		Since $\abs{S'}\leq k$, two of them, say $C_1$ and~$C_2$, have no vertices in $S'$.
		Then $G[V(C_1)\cup V(C_2)\cup\{v,w\}]$ is not a $3$-leaf power by Lemma~\ref{lem:obs from blocking}, which is an induced subgraph of $G\setminus S'$, contradicting the assumption that $S'$ is a modulator of $G$.
	\end{subproof}
	
	Since $H$ is connected and $V(U)\setminus S'$ is an independent set of $G'\setminus S'$, both~$b_1$ and~$b_2$ have neighbors in $V(G)\setminus(V(U)\cup S')$.
	Then by Claim~\ref{clm:Rnumberfirst-1}, $b_1$ and~$b_2$ are false twins in $G'\setminus S'$.
	By (\ref{O5}), both~$b_1$ and~$b_2$ have degree $2$ in~$H$.
	Let $z_1$ and $z_2$ be the neighbors of $b_1$ in $V(H)\cap S$.
	Then $U$ is adjacent to $(\{z_1,z_2\},2)$ in $Q$.
	Since $M$ is maximal, $Y$ has distinct elements $C'_1,\ldots,C'_{k+2}$ different from $U$ such that $C'_i$ has a vertex adjacent to both $z_1$ and $z_2$ for every $i\in\{1,\ldots,k+2\}$.
	Since $\abs{S'}\leq k$, two of them, say $C'_1$ and $C'_2$, have no vertices in $S'$.
	Note that $S'$ has no vertices of $H$, because $H$ is an induced subgraph of $G'\setminus S'$.
	
	If $z_1$ and $z_2$ are non-adjacent, then $G[V(C'_1)\cup V(C'_2)\cup\{z_1,z_2\}]$ has a hole of length $4$, which is an induced subgraph of $G\setminus S'$, contradicting the assumption that $S'$ is a modulator of $G$.
	Therefore, $z_1$ and $z_2$ are adjacent.
	Since $G[\{b_1,z_1,z_2\}]$ is a triangle, $H$ is not a hole, and therefore $\abs{V(H)}=5$.
	Let $a$ be a vertex of $H$ different from $b_1,b_2,z_1$, and $z_2$.
	We may assume that $a$ is not in $V(C'_1)$, because otherwise we may swap $C'_1$ and $C'_2$.
	Let $c$ be a vertex of $C'_1$ adjacent to both $z_1$ and $z_2$.
	Note that $G[\{b_1,b_2,z_1,z_2\}]$ is isomorphic to $K_4\setminus b_1b_2$.
	Since the dart and a hole of length $4$ are the only obstructions having false twins, $H$ is isomorphic to the dart.
	Thus, $N_H(a)=\{z_i\}$ for some $i\in\{1,2\}$.
	Then $G[\{a,b_1,c,z_1,z_2\}]$ is isomorphic to the gem if $c$ is adjacent to $a$, and the dart if $c$ is non-adjacent to $a$, contradicting the assumption that $S'$ is a modulator of $G$.
	Therefore, if $(G,k)$ is a yes-instance, then so is $(G',k)$.	
	
	We now show that if $(G',k)$ is a yes-instance, then so is $(G,k)$.
	Suppose that $G'$ has a modulator~$S'$ of size at most $k$ and $G\setminus S'$ has an obstruction $H$.
	Since $G'\setminus S'$ is a $3$-leaf power, $H$ has an edge of $U\setminus S'$.
	Thus, $\abs{V(U)\setminus S'}\geq2$.
	Since $S$ is a $1$-redundant modulator of $G$, $H$ has at least two vertices in $S\setminus S'\subseteq V(G)\setminus(V(U)\cup S')$.
	Thus, $\abs{V(G)\setminus(V(U)\cup S')}\geq2$.
	
	\begin{CLM}\label{clm:Rnumberfirst-2}
		$(V(U)\setminus S',V(G)\setminus(V(U)\cup S'))$ is a complete split of $G\setminus S'$.
	\end{CLM}
	\begin{subproof}[Proof of Claim~\ref{clm:Rnumberfirst-2}]
		Suppose not.
		We first show that $V(G)\setminus(V(U)\cup S')$ has a blocking pair for $V(U)\setminus S'$.
		Since $U$ is a component of $G\setminus S$, and has no neighbors of $S\setminus S^+$, it suffices to show that $S^+\setminus S'$ has a blocking pair for $V(U)\setminus S'$.
		We may assume that for all distinct vertices $v,w\in S^+\setminus S'$, if both~$v$ and~$w$ have neighbors in $V(U)\setminus S'$, then $v$ and $w$ are adjacent and have the same neighborhoods in $V(U)\setminus S'$, because otherwise $\{v,w\}$ is a blocking pair for $V(U)\setminus S'$.
		For each vertex $v\in S^+\setminus S'$ having neighbors in $V(U)\setminus S'$, the set of neighbors of $v$ in $V(U)\setminus S'$ is a true twin-set in $U\setminus S'$, that is, a clique.
		Therefore, $N_G(S^+\setminus S')\cap(V(U)\setminus S')$ is a clique of $U\setminus S'$.
		Thus, by Lemma~\ref{lem:blocking pair}, $S^+\setminus S'$ has a blocking pair $\{v,w\}$ for $V(U)\setminus S'$, so for $V(U)$.
		
		Since $V(U)$ is blocked by $\{v,w\}$, $U$ is adjacent to $(\{v,w\},1)$ in $Q$.
		Since~$M$ is maximal, $Y$ has distinct elements $C_1,\ldots,C_{k+2}$ different from~$U$ such that $V(C_i)$ is blocked by $\{v,w\}$ for every $i\in\{1,\ldots,k+2\}$.
		Since $\abs{S'}\leq k$, two of them, say $C_1$ and $C_2$, have no vertices in $S'$.
		Then $G[V(C_1)\cup V(C_2)\cup\{v,w\}]$ is not a $3$-leaf power by Lemma~\ref{lem:obs from blocking}, which is an induced subgraph of $G'\setminus S'$, contradicting the assumption that $S'$ is a modulator of $G'$.
	\end{subproof}
	
	Since both $V(U)\setminus S'$ and $V(G)\setminus(V(U)\cup S')$ contain vertices of $H$, it is not a hole by Lemma~\ref{lem:no holes between} and Claim~\ref{clm:Rnumberfirst-2}, and therefore $\abs{V(H)}=5$.
	Let $t_1,\ldots,t_p$ be the vertices of $H$ in $V(U)\setminus S'$, and $s_1,\ldots,s_q$ be the vertices of $H$ in $V(G)\setminus(V(U)\cup S')$.
	Note that both $p$ and $q$ are at least $2$.
	Since $\abs{V(H)}=5$, either $(p,q)=(3,2)$ or $(p,q)=(2,3)$ holds.
	
	If $(p,q)=(3,2)$, then by Lemma~\ref{lem:obs between} and Claim~\ref{clm:Rnumberfirst-2}, we may assume that $N_H(s_1)=\{s_2\}$ and $N_H(s_2)=\{s_1,t_1,t_2\}$.
	Since $U$ has no neighbors of $S\setminus S^+$, we have that $s_2\in S^+$, so $t_1$ and $t_2$ are true twins in $U\setminus S'$, contradicting (\ref{O1}).
	
	Therefore, $(p,q)=(2,3)$.
	By Lemma~\ref{lem:obs between} and Claim~\ref{clm:Rnumberfirst-2}, we may assume that $N_H(t_1)=\{t_2\}$ and $N_H(t_2)=\{t_1,s_1,s_2\}$.
	Note that $\{s_1,s_2\}\subseteq S\setminus S'$.
	Then $U$ is adjacent to $(\{s_1,s_2\},3)$ in $Q$.
	Since $M$ is maximal, $Y$ has distinct elements $C''_1,\ldots,C''_{k+2}$ different from $U$ such that $C''_i$ has an edge $x_iy_i$ such that $x_i$ is adjacent to both~$s_1$ and~$s_2$, and $y_i$ is non-adjacent to both~$s_1$ and~$s_2$ for every $i\in\{1,\ldots,k+2\}$.
	Since $\abs{S'}\leq k$, two of them, say $C''_1$ and~$C''_2$, have no vertices in $S'$.
	We may assume that $s_3$ is not in $V(C''_1)$, because otherwise we may swap $C''_1$ and $C''_2$.
	We remark that the bull is the only possible graph to which $H$ is isomorphic.
	Thus, $s_1$ and $s_2$ are adjacent, and~$s_3$ is adjacent to exactly one of~$s_1$ and~$s_2$ in~$H$.
	Then $G[\{x_1,y_1,s_1,s_2,s_3\}]$ is isomorphic to the gem if both~$x_1$ and~$y_1$ are adjacent to~$s_3$, the bull if both~$x_1$ and~$y_1$ are non-adjacent to $s_3$, and the dart if $x_1$ is adjacent to $s_3$ and $y_1$ is non-adjacent to $s_3$, and has a hole of length $4$ if $x_1$ is non-adjacent to~$s_3$ and~$y_1$ is adjacent to~$s_3$.
	However, the graph is an induced subgraph of $G'\setminus S'$, contradicting the assumption that $S'$ is a modulator of $G'$.
	Therefore, if $(G',k)$ is a yes-instance, then so is $(G,k)$.
\end{proof}

Next we prove that (\ref{Rnumbersecond}) is safe.

\begin{RULE}[\ref{Rnumbersecond}]
	Let $\mathcal{A}$ be the set of ordered pairs $(A_1,A_2)$ of disjoint subsets of $S$ such that $2\leq\abs{A_1}+\abs{A_2}\leq4$, and $X$ be the set of isolated vertices in $G\setminus S$.
	For each $(A_1,A_2)\in\mathcal{A}$, let $X_{A_1,A_2}$ be a maximal set of vertices $v\in X$ such that $N_G(v)\cap(A_1\cup A_2)=A_1$ and $\abs{X_{A_1,A_2}}\leq k+3$.
	If $X\setminus\bigcup_{(A_1,A_2)\in\mathcal{A}}X_{A_1,A_2}$ contains a vertex $u$, then replace $(G,k)$ with $(G\setminus u,k)$.
\end{RULE}
\begin{proof}[Proof of Safeness]
	If $(G,k)$ is a yes-instance, then there is a modulator $S'$ of size at most $k$.
	Then $(G\setminus u) \setminus (S'\setminus\{u\})$ is also a $3$-leaf power, implying that $(G\setminus u,k)$ is a yes-instance.
	
	We now show that if $(G\setminus u,k)$ is a yes-instance, then so is $(G,k)$.
	Suppose that $G\setminus u$ has a modulator~$S'$ of size at most $k$ and $G\setminus S'$ has an obstruction~$H$.
	Since $G\setminus(S'\cup\{u\})$ is a $3$-leaf power, $H$ has~$u$.
	
	Suppose first that $H$ is a hole.
	Then $u$ has exactly two neighbors $v_1,v_2\in S\cap V(H)$ which are non-adjacent.
	By definition, $X_{\{v_1,v_2\},\emptyset}$ contains distinct vertices $u_1,\ldots,u_{k+3}$ different from~$u$.
	Note that $H$ has at most one of $u_1,\ldots,u_{k+3}$, because $v_1$ and $v_2$ have at most one common neighbor in $V(H)$ except for~$u$.
	Since $\abs{S'}\leq k$, two of them, say $u_1$ and $u_2$, are not in $S'\cup V(H)$.
	Thus, $G[\{v_1,v_2,u_1,u_2\}]$ is a hole, which is an induced subgraph of $G\setminus(S'\cup\{u\})$, contradicting the assumption that $S'$ is a modulator of $G\setminus u$.
	
	Therefore, $\abs{V(H)}=5$.
	Note that $2\leq\abs{S\cap V(H)}\leq4$, because $S$ is a $1$-redundant modulator of $G$ and $u\in V(H)\setminus S$.
	Let $B_1:=(S\cap V(H))\cap N_G(u)$, and $B_2:=(S\cap V(H))\setminus N_G(u)$.
	By definition, $X_{B_1,B_2}$ contains distinct vertices $u_1,\ldots,u_{k+3}$ different from $u$.
	Since $\abs{V(H)}=5$ and $2\leq\abs{S\cap V(H)}\leq4$, $H$ has at most three vertices in $X$ including $u$.
	Thus, $H$ has at most two of $u_1,\ldots,u_{k+3}$.
	Since $\abs{S'}\leq k$, one of them, say $u_1$, is not in $S'\cup V(H)$.
	Thus, $G[(V(H)\setminus\{u\})\cup\{u_1\}]$ is isomorphic to $H$, which is an induced subgraph of $G\setminus(S'\cup\{u\})$, contradicting the assumption that~$S'$ is a modulator of $G\setminus u$.
	Therefore, if $(G\setminus u,k)$ is a yes-instance, then so is~$(G,k)$.
\end{proof}

Here is the proof that (\ref{Rcomplete}) is safe.

\begin{RULE}[\ref{Rcomplete}]
	Let $\mathcal{A}$ be the set of ordered pairs $(A_1,A_2)$ of disjoint subsets of $S$ such that $2\leq\abs{A_1}+\abs{A_2}\leq4$, and $C$ be a complete component of $G\setminus S$.
	For each $(A_1,A_2)\in\mathcal{A}$, let $X_{A_1,A_2}$ be a maximal set of vertices~$v$ of~$C$ such that $N_G(v)\cap(A_1\cup A_2)=A_1$ and $\abs{X_{A_1,A_2}}\leq k+3$.
	If $C\setminus\bigcup_{(A_1,A_2)\in\mathcal{A}}X_{A_1,A_2}$ has a vertex~$u$, then replace $(G,k)$ with $(G\setminus u,k)$.
\end{RULE}
\begin{proof}[Proof of Safeness]
	If $(G,k)$ is a yes-instance, then there is a modulator $S'$ of size at most $k$.
	Then $(G\setminus u)\setminus(S'\setminus\{u\})$ is also a $3$-leaf power, implying that $(G\setminus u,k)$ is a yes-instance.

	We now show that if $(G\setminus u,k)$ is a yes-instance, then so is $(G,k)$.
	Suppose that $G\setminus u$ has a modulator~$S'$ of size at most $k$, and $G\setminus S'$ has an obstruction~$H$.
	Since $G\setminus(S'\cup\{u\})$ is a $3$-leaf power, $H$ has $u$.
	
	Suppose first that $H$ is a small obstruction.
	Note that $2\leq\abs{S\cap V(H)}\leq~4$, because $S$ is a $1$-redundant modulator of $G$ and $u\in V(H)\setminus S$.
	Let $B_1:=(S\cap V(H))\cap N_G(u)$, and $B_2:=(S\cap V(H))\setminus N_G(u)$.
	By definition, $X_{B_1,B_2}$ contains distinct vertices $u_1,\ldots,u_{k+3}$ different from~$u$.
	Since $\abs{V(H)}\leq5$ and $2\leq\abs{S\cap V(H)}\leq4$, $H$ has at most three vertices of $C$ including~$u$.
	Thus,~$H$ has at most two of $u_1,\ldots,u_{k+3}$.
	Since $\abs{S'}\leq k$, one of them, say~$u_1$, is not in $S'\cup V(H)$.
	Thus, $G[(V(H)\setminus\{u\})\cup\{u_1\}]$ is isomorphic to $H$, which is an induced subgraph of $G\setminus(S'\cup\{u\})$, contradicting the assumption that~$S'$ is a modulator of $G\setminus u$.
	
	Therefore, $H$ is a hole of length at least 6.
	Since $C$ is complete, $H$ has at most two vertices of $C$.
	We first consider the case that $H$ has exactly one vertex $u$ of $C$.
	In this case, $V(H)\cap S$ contains distinct vertices~$v_1$ and~$v_2$ which are neighbors of $u$.
	Then $H\setminus u$ is an induced path of length at least~$4$ between~$v_1$ and~$v_2$.
	By definition, $X_{\{v_1,v_2\},\emptyset}$ contains distinct vertices $u_1,\ldots,u_{k+3}$ different from $u$.
	Since $\abs{S'}\leq k$, one of them, say $u_1$, is not in~$S'$.
	Then $G[(V(H)\setminus\{u\})\cup\{u_1\}]$ is not distance-hereditary by Lemma~\ref{lem:not DH}, which is an induced subgraph of $G\setminus(S'\cup\{u\})$, contradicting the assumption that $S'$ is a modulator of $G\setminus u$.
	
	We now consider that $H$ has exactly two vertices $u$ and $u'$ of $C$.
	In this case, $V(H)\cap S$ contains distinct vertices $v_1$ and~$v_2$such that $v_1$ is adjacent to $u$ and $v_2$ is adjacent to $u'$.
	Note that $u'$ is non-adjacent to $v_1$.
	Then $H\setminus u$ is an induced path of length at least $4$ between $v_1$ and $u'$.
	By definition, $X_{\{v_1\},\{v_2\}}$ contains distinct vertices $u_1,\ldots,u_{k+3}$ different from $u$.
	Since $\abs{S'}\leq k$, one of them, say $u_1$, is not in $S'$.
	Then $G[(V(H)\setminus\{u\})\cup\{u_1\}]$ is not distance-hereditary by Lemma~\ref{lem:not DH}, which is an induced subgraph of $G\setminus(S'\cup\{u\})$, contradicting the assumption that $S'$ is a modulator of $G\setminus u$.
	Therefore, if $(G\setminus u,k)$ is a yes-instance, then so is~$(G,k)$.
\end{proof}

We now show that one can avoid large true twin-sets.

\begin{RULE}[\ref{Rtruetwin}]
	If $G$ has a true twin-set $X$ of size at least $k+2$, then replace $(G,k)$ with $(G\setminus v,k)$ for any vertex $v\in X$.
\end{RULE}
\begin{proof}[Proof of Safeness]
	If $(G,k)$ is a yes-instance, then there is a modulator $S'$ of size at most $k$.
	Then $(G\setminus v)\setminus(S'\setminus\{v\})$ is also a $3$-leaf power, implying that $(G\setminus v,k)$ is a yes-instance.
	
	We now show that if $(G\setminus v,k)$ is a yes-instance, then so is $(G,k)$.
	Suppose that $G\setminus v$ has a modulator~$S'$ of size at most $k$ and~$G\setminus S'$ has an obstruction~$H$.
	Since $G\setminus(S'\cup\{v\})$ is a $3$-leaf power, $H$ has $v$.
	By~(\ref{O1}),~$v$ is the only vertex in $V(H)\cap X$.
	Since $\abs{S'}\leq k$, $X$ contains a vertex~$w$ not in $S'\cup\{v\}$.
	Then $G[(V(H)\setminus\{v\})\cup\{w\}]$ is isomorphic to~$H$, which is an induced subgraph of $G\setminus(S'\cup\{v\})$, contradicting that $S'$ is a modulator of $G\setminus v$.
\end{proof}

We now show that if 
$B$ is a bag corresponding to a leaf in a tree-clique decomposition of $G\setminus S$ for a $1$-redundant modulator $S$ of a graph $G$, then we may assume that it has a neighbor in~$S$.

\begin{RULE}[\ref{Rbagfirst}]
	Let $B$ be a maximal true twin-set in $G\setminus S$.
	If $G\setminus(S\cup B)$ has a component $D$ such that $V(D)\cap N_G(S)=\emptyset$ and $V(D)\setminus N_G(B)\neq\emptyset$, then replace $(G,k)$ with $(G\setminus(V(D)\setminus N_G(B)),k)$.
\end{RULE}
\begin{proof}[Proof of Safeness]
	Let $G':=G\setminus(V(D)\setminus N_G(B))$.
	If $(G,k)$ is a yes-instance, then there is a modulator $S'$ of size at most $k$.
	Then $G'\setminus(S'\setminus(V(D)\setminus N_G(B)))$ is also a $3$-leaf power, implying that $(G',k)$ is a yes-instance.
	
	We now show that if $(G',k)$ is a yes-instance, then so is $(G,k)$.
	Suppose that $G'$ has a modulator~$S'$ of size at most $k$ and $G\setminus S'$ has an obstruction~$H$.
	Since $G'\setminus S'$ is a $3$-leaf power, $H$ has at least one vertex in $V(D)\setminus N_G(B)$.
	Since $H$ is connected and $V(D)\cap N_G(S)=\emptyset$, at least one vertex of $H$ is in $V(D)\cap N_G(B)$.
	Thus,~$H$ has at least two vertices of $D$.
	Since $V(H)\cap S\neq\emptyset$, the set $V(H)\cap B$ is a clique cut-set of $H$, and therefore $H$ is not a hole.
	Thus, $\abs{V(H)}=5$.
	Since~$S$ is a $1$-redundant modulator of $G$, we have that $\abs{V(H)\cap S}=2$, and therefore $\abs{V(H)\cap B}=1$ and $\abs{V(H)\cap V(D)}=2$.
	Thus, the vertex $u\in V(H)\cap B$ is a cut-vertex of $H$ such that at least two components of~$H\setminus v$ have the same number of vertices, contradicting~(\ref{O4}).
\end{proof}

\subsection{(\ref{Rbagsecond}) is safe}\label{subsec:safeness6}

To show that (\ref{Rbagsecond}) is safe, we will use the following two lemmas.
Lemma~\ref{lem:completeness of bags} will be useful because it implies that for a $1$-redundant modulator $S$ of $G$, a set $B\subseteq V(G)\setminus S$ is a true twin-set in $G\setminus S$ if and only if it is a true twin-set in~$G$.

\begin{LEM}\label{lem:completeness of bags}
	Let $G$ be a $3$-leaf power.
	If $G$ has a vertex $v$ such that $G\setminus v$ is connected and incomplete, then 
	two vertices of $G\setminus v$ are true twins in $G\setminus v$ if and only if they are true twins in $G$.
\end{LEM}
\begin{proof}
	The backward direction is trivial.
	
	To prove the forward direction, 
	suppose for contradiction that $t_1$ and $t_2$ are true twins in~$G\setminus v$ while they are not true twins in $G$.
	Then $v$ is adjacent to exactly one of~$t_1$ and~$t_2$.
	By symmetry, we may assume that $v$ is adjacent to $t_1$.
	Note that $\abs{N_G(t_2)}\geq2$, because otherwise $G\setminus v$ is isomorphic to $K_2$.
	
	If $N_G(t_2)$ is a clique, then $G\setminus v$ has at least one vertex not in $N_G(t_2)$, because otherwise $G\setminus v$ is complete.
	Since $G\setminus v$ is connected, $G$ has an edge~$xy$ such that~$x$ is adjacent to both~$t_1$ and~$t_2$, and~$y$ is non-adjacent to both~$t_1$ and~$t_2$.
	Then $G[\{v,x,y,t_1,t_2\}]$ is isomorphic to the gem if both~$x$ and~$y$ are adjacent to~$v$, the bull if both~$x$ and~$y$ are non-adjacent to~$v$, and the dart if~$x$ is adjacent to~$v$ and~$y$ is non-adjacent to~$v$, and has a hole of length~$4$ if~$x$ is non-adjacent to~$v$ and~$y$ is adjacent to~$v$, contradicting the assumption that~$G$ is a $3$-leaf power.
	
	Therefore, $t_2$ has distinct neighbors~$x$ and~$y$ which are non-adjacent.
	Then $G[\{v,x,y,t_1,t_2\}]$ has a hole of length~$4$ if both~$x$ and~$y$ are adjacent to~$v$, and is isomorphic to the gem if exactly one of~$x$ and~$y$ is adjacent to~$v$, and the dart if both~$x$ and~$y$ are non-adjacent to~$v$, contradicting the assumption that~$G$ is a $3$-leaf power.
\end{proof}
	
\begin{LEM}\label{lem:obstruction outside}
	Let $G$ be a graph and $(A,B)$ be a complete split of $G$.
	If $G$ has an obstruction $H$ and a non-empty $1$-redundant modulator $S\subseteq B\setminus N(A)$, then $H$ has at most one vertex in $A$.
\end{LEM}
\begin{proof}
	Suppose not.
	Since $S$ is a $1$-redundant modulator of $G$, at least two vertices of $H$ are in $S$.
	Thus, $H$ has vertices in both $A$ and $B$.
	Since $(A,B)$ is a complete split of $G$, by Lemma~\ref{lem:no holes between}, $H$ is not a hole, and therefore $\abs{V(H)}=5$.
	Then $\abs{V(H)\cap N(A)}\leq5-\abs{V(H)\cap A}-\abs{V(H)\cap S}\leq5-2-2$.
	Since $H$ is connected, we have that $\abs{V(H)\cap N(A)}=1$ and $\abs{V(H)\cap A}=\abs{V(H)\cap(B\setminus N(A))}=2$.
	Thus, the vertex $u\in V(H)\cap N(A)$ is a cut-vertex of $H$ such that at least two components of $H\setminus v$ have the same number of vertices, contradicting (\ref{O4}).
\end{proof}

We show the safeness of (\ref{Rbagsecond}).
This rule will allow us to reduce the instance when a tree-clique decomposition of $G\setminus S$ for a $1$-redundant modulator~$S$ of a graph~$G$ has a node of large degree with certain properties.

\begin{RULE}[\ref{Rbagsecond}]
	Let $B$ be a maximal true twin-set in $G\setminus S$.
	If $G\setminus(S\cup B)$ has distinct components $D_1,\ldots,D_{k+4}$ such that $N_G(V(D_1))=\cdots=N_G(V(D_{k+4}))$, and either $V(D_1)\cup\cdots\cup V(D_{k+4})\subseteq N_G(B)$, or $\emptyset\neq V(D_i)\cap N_G(B)\neq V(D_i)$ for every $i\in\{1,\ldots,k+4\}$, then replace $(G,k)$ with $(G\setminus V(D_1),k)$.
\end{RULE}
\begin{proof}[Proof of Safeness]
	If $(G,k)$ is a yes-instance, then there is a modulator $S'$ of size at most $k$.
	Then $(G\setminus V(D_1))\setminus(S'\setminus V(D_1))$ is also a $3$-leaf power, implying that $(G\setminus V(D_1),k)$ is a yes-instance.
	
	We now show that if $(G\setminus V(D_1),k)$ is a yes-instance, then so is $(G,k)$.
	Suppose that $G\setminus V(D_1)$ has a modulator $S'$ of size at most~$k$ and $G\setminus S'$ has an obstruction~$H$.
	Since $G\setminus(V(D_1)\cup S')$ is a $3$-leaf power,~$H$ has at least one vertex of $D_1$.
	By the definition of the $1$-redundant modulator, $G\setminus(S\setminus\{v\})$ is a $3$-leaf power for every vertex $v\in S$.
	Thus, if $v$ has a neighbor in a true twin-set $X$ in $G\setminus S$, then $\{v\}$ is complete to $X$ by Lemma~\ref{lem:completeness of bags}.
	This means that every true twin-set in $G\setminus S$ is a true twin-set in $G$ as well.

	We will use the following claim.
	
	\begin{CLM}\label{clm:6-1}
		For every $i\in\{1,\ldots,k+4\}$, $V(D_i)\cap N_G(B)$ is a true twin-set in~$G\setminus S$.
	\end{CLM}
	\begin{subproof}
		We first show that $V(D_i)\cap N_G(B)$ is a clique.
		Suppose for contradiction that $V(D_i)\cap N_G(B)$ contains two vertices~$x$ and~$y$ which are non-adjacent.
		Since $D_i$ is connected, we can find an induced path~$P$ in~$D_i$ between~$x$ and~$y$.
		Since $D_i$ is a $3$-leaf power, by Lemma~\ref{lem:not DH}, the length of~$P$ is exactly~$2$.
		Let~$z$ be the internal vertex of~$P$.
		Then $z\in N_G(B)$, because otherwise $V(P)$ with a vertex in $B$ induces a hole of length $4$.
		Then for vertices $v\in B$ and $v'\in V(D_j)\cap N_G(B)$ for any $j\in\{1,\ldots,k+4\}\setminus\{i\}$, $G[\{v,v',x,y,z\}]$ is isomorphic to the dart, which is an induced subgraph of $G\setminus S$, contradicting the assumption that~$S$ is a modulator of $G$.
		Therefore, $V(D_i)\cap N_G(B)$ is a clique.
	
		We now need to show that $V(D_i)\cap N_G(B)$ is a twin-set in $G\setminus S$.
		Suppose for contradiction that $G\setminus S$ has a vertex $w$ adjacent to a vertex $t_1\in V(D_i)\cap N_G(B)$ and non-adjacent to a vertex $t_2\in V(D_i)\cap N_G(B)$.
		Note that $w$ is contained in $V(D_i)\setminus N_G(B)$.
		Then for vertices $v\in B$ and $v'\in V(D_j)\cap N_G(B)$ for any $j\in\{1,\ldots,k+4\}\setminus\{i\}$, $G[\{v,v',w,t_1,t_2\}]$ is isomorphic to the bull, which is an induced subgraph of $G\setminus S$ contradicting the assumption that~$S$ is a modulator of $G$.
	\end{subproof}
	
	By assumption, either $V(D_1)\cup\cdots\cup V(D_{k+4})\subseteq N_G(B)$, or $\emptyset\neq V(D_i)\cap N_G(B)\neq V(D_i)$ for every $i\in\{1,\ldots,k+4\}$ holds.
	Suppose first that $V(D_1)\cup\cdots\cup V(D_{k+4})\subseteq N_G(B)$.
	By Claim~\ref{clm:6-1} and~(\ref{O1}), for every $i\in\{1,\ldots,k+4\}$, at most one vertex of $H$ is in $V(D_i)$.
	If $H$ is a small obstruction, then by (\ref{O2}), at most three of $D_1,\ldots,D_{k+4}$ have vertices of $H$.
	Otherwise, at most two of $D_1,\ldots,D_{k+4}$ have vertices of $H$, because otherwise $H$ has a vertex of degree at least~$3$ in~$H$.
	Since $\abs{S'}\leq k$, one of $D_2,\ldots,D_{k+4}$, say~$D_j$, has no vertices in $S'\cup V(H)$.
	Let $s$ be the vertex in $V(H)\cap V(D_1)$ and $t$ be any vertex in $D_j$.
	Since $N_G(V(D_1))=N_G(V(D_j))$, we have that $s$ and $t$ have the same sets of neighbors in $V(H)$.
	Then $G[(V(H)\setminus\{s\})\cup\{t\}]$ is isomorphic to~$H$, which is an induced subgraph of $G\setminus(V(D_1)\cup S')$, contradicting the assumption that~$S'$ is a modulator of $G\setminus V(D_1)$.
	Hence, $\emptyset\neq V(D_i)\cap N_G(B)\neq V(D_i)$ for every $i\in\{1,\ldots,k+4\}$.
	
	We will use the following two claims.
	
	\begin{CLM}\label{clm:6-2}
		For every $i\in\{1,\ldots,k+4\}$, $V(D_i)\setminus N_G(B)$ contains no vertices in~$N_G(S)$.
	\end{CLM}
	\begin{subproof}
		Suppose for contradiction that $V(D_i)\setminus N_G(B)$ contains a vertex $p_i$ which is a neighbor of some $v\in S$.
		Let $j\in\{1,\ldots,k+4\}\setminus\{i\}$.
		Since $N_G(V(D_i))=N_G(V(D_j))$, $V(D_j)$ contains a vertex~$p_j$ which is a neighbor of~$v$.
		Since some vertex in $B$ has neighbors in both $D_i$ and $D_j$, by the connectedness, $G\setminus S$ has an induced path $P$ between $p_i$ and $p_j$.
		Note that the length of $P$ is at least $3$, because $p_i$ is not in $N_G(B)$.
		Since $v$ is adjacent to both ends of $P$, by Lemma~\ref{lem:not DH}, $G[V(P)\cup\{v\}]$ is not distance-hereditary, which is an induced subgraph of $G\setminus(S\setminus\{v\})$, contradicting the assumption that $S$ is $1$-redundant.
	\end{subproof}
	
	For every $i\in\{1,\ldots,k+4\}$, let $D_{i,1},\ldots,D_{i,m(i)}$ be the components of~$D_i\setminus N_G(B)$.
	
	\begin{CLM}\label{clm:6-3}
		Let $i\in\{1,\ldots,k+4\}$.
		For every $j\in\{1,\ldots,m(i)\}$, if $\abs{V(D_{i,j})}\geq2$, then $(V(D_{i,j}),V(G)\setminus V(D_{i,j}))$ is a complete split of $G$.
	\end{CLM}
	\begin{subproof}
		By Claims~\ref{clm:6-1} and~\ref{clm:6-2}, it suffices to show that $N_G(N_G(B))\cap V(D_{i,j})$ is a clique.
		Suppose for contradiction that $N_G(N_G(B))\cap V(D_{i,j})$ contains vertices $x$ and $y$ which are non-adjacent.
		Since~$D_{i,j}$ is connected, it has an induced path $P$ between $x$ and $y$.
		Since $D_{i,j}$ is a $3$-leaf power, by Lemma~\ref{lem:not DH}, the length of $P$ is exactly $2$.
		Let $z$ be the internal vertex of $P$.
		Then $z\in N_G(N_G(B))$, because otherwise $V(P)$ with a vertex $v\in N_G(B)\cap V(D_i)$ induces a hole of length~$4$.
		Then for any vertex $v'\in B$, $G[\{v,v',x,y,z\}]$ is isomorphic to the dart, which is an induced subgraph of $G\setminus S$, contradicting the assumption that $S$ is a modulator of $G$.
	\end{subproof}
	
	By Claim~\ref{clm:6-3} and Lemma~\ref{lem:obstruction outside}, for every $i\in\{1,\ldots,k+4\}$ and every $j\in\{1,\ldots,m(i)\}$, at most one vertex of $H$ is in $V(D_{i,j})$.
	By Claim~\ref{clm:6-1} and~(\ref{O1}), at most one vertex of $H$ is in $V(D_i)\cap N_G(B)$.
	Therefore, at most one component of $D_i\setminus N_G(B)$ has a vertex of $H$, because otherwise $H$ has false twins of degree at most $1$, contradicting~(\ref{O5}).
	By (\ref{O2}), if $H$ is a small obstruction, then at most three of $D_1,\ldots,D_{k+4}$ have vertices of $H$.
	Otherwise, at most two of $D_1,\ldots,D_{k+4}$ have vertices of $H$, because otherwise $H$ has a vertex of degree at least $3$ in~$H$.
	Since $\abs{S'}\leq k$, one of $D_2,\ldots,D_{k+4}$, say $D_i$, has no vertices in $S'\cup V(H)$.
	Since $H$ is connected and has vertices in both~$S$ and $V(D_1)$, by Claim~\ref{clm:6-2}, $H$ has a vertex $s_1\in V(D_1)\cap N_G(B)$.
	Let $t_1t_2$ be an edge of $D_i$ such that $t_1\in V(D_i)\cap N_G(B)$ and $t_2\in V(D_i)\setminus N_G(B)$.
	Since $N_G(V(D_1))=N_G(V(D_i))$, by Claim~\ref{clm:6-1}, $s_1$ and $t_1$ have the same sets of neighbors in $V(H)\setminus V(D_1)$.

	If $H$ has a vertex $s_2\in V(D_1)\setminus N_G(B)$, then $V(D_1)\cap V(H)=\{s_1,s_2\}$, because each of $V(D_1)\cap N_G(B)$ and $V(D_1)\setminus N_G(B)$ has at most one vertex of $H$.
	Then $G[(V(H)\setminus\{s_1,s_2\})\cup\{t_1,t_2\}]$ is isomorphic to $H$, which is an induced subgraph of $G\setminus(V(D_1)\cup S')$, contradicting the assumption that $S'$ is a modulator of $G\setminus V(D_1)$.
	
	Therefore, $H$ has no vertices in $V(D_1)\setminus N_G(B)$.
	Then $G[(V(H)\setminus\{s_1\})\cup\{t_1\}]$ is isomorphic to~$H$, which is an induced subgraph of $G\setminus(V(D_1)\cup S')$, contradicting the assumption that $S'$ is a modulator of $G\setminus V(D_1)$.
\end{proof}

\subsection{(\ref{Rbagthird}) is safe}\label{subsec:safeness7}

Finally we show the safeness of (\ref{Rbagthird}).
We will use the following lemma.
	
\begin{LEM}\label{lem:obstruction intersecting}
	Let $G$ be a graph.
	If $G$ has disjoint true twin-sets $B_1,\ldots,B_m$ for $m\geq5$ such that $N(B_i)=B_{i-1}\cup B_{i+1}$ for every $i\in\{2,\ldots,m-1\}$, then~$G$ is a $3$-leaf power if and only if $G\setminus(B_3\cup\cdots\cup B_{m-2})$ is a $3$-leaf power and has no paths between a vertex in $B_2$ and a vertex in~$B_{m-1}$.
\end{LEM}
\begin{proof}
	It is clear that if $G$ is a $3$-leaf power, then $G\setminus(B_3\cup\cdots\cup B_{m-2})$ is a $3$-leaf power and has no paths between a vertex in $B_2$ and a vertex in $B_{m-1}$, because otherwise $G$ has a hole.
	
	Conversely, suppose for contradiction that $G\setminus(B_3\cup\cdots\cup B_{m-2})$ is a $3$-leaf power and has no paths between a vertex in $B_2$ and a vertex in $B_{m-1}$ while~$G$ has an obstruction~$H$.
	Since $G\setminus(B_3\cup\cdots\cup B_{m-2})$ is a $3$-leaf power,~$H$ has at least one vertex in $B_3\cup\cdots\cup B_{m-2}$.
	For each $i\in\{1,\ldots,m\}$, since~$B_i$ is a true twin-set in $G$, by (\ref{O1}), at most one vertex of $H$ is in $B_i$.
	Then every vertex of $H$ in $B_2\cup\cdots\cup B_{m-1}$ has degree at most $2$ in~$H$.
	
	If $H$ has a vertex $v\in B_j$ for some $j\in\{3,\ldots,m-2\}$, then by (\ref{O6}), both~$B_{j-1}$ and~$B_{j+1}$ have vertices of $H$.
	This means that $B_i$ contains exactly one vertex of $H$ for every $i\in\{2,\ldots,m-1\}$.
	Then $H$ has vertices in each of~$B_1$ and~$B_m$ as well by (\ref{O6}).
	Thus, $V(H)\cap(B_2\cup\cdots\cup B_{m-1})$ contains at least three vertices of degree $2$ in~$H$, and therefore by (\ref{O7}), $H$ is a hole.
	However, $H\setminus(B_3\cup\cdots\cup B_{m-2})$ is a path in $G\setminus(B_3\cup\cdots\cup B_{m-2})$ between a vertex in~$B_2$ and a vertex in~$B_{m-1}$, a contradiction.
\end{proof}

We now show the safeness of (\ref{Rbagthird}), 
which allows us to shorten a long path in a tree-clique decomposition of $G\setminus S$ for a $1$-redundant modulator~$S$ of a graph~$G$.

\begin{RULE}[\ref{Rbagthird}]
	Let $B_1,\ldots,B_m$ be pairwise disjoint maximal true twin-sets in $G\setminus S$ for $m\geq6$ such that $N_G(B_i)=B_{i-1}\cup B_{i+1}$ for every $i\in\{2,\ldots,m-1\}$.
	Let $\ell$ be an integer in $\{3,\ldots,m-2\}$ such that $\abs{B_\ell}\leq\abs{B_i}$ for every $i\in\{3,\ldots,m-2\}$, and $G'$ be a graph obtained from $G\setminus((B_3\cup\cdots\cup B_{m-2})\setminus B_\ell)$ by making $B_\ell$ complete to $B_2\cup B_{m-1}$.
	Then replace $(G,k)$ with $(G',k)$.	
\end{RULE}
\begin{proof}[Proof of Safeness]
	We first show that if $(G,k)$ is a yes-instance, then so is~$(G',k)$.
	Suppose that $G$ has a minimal modulator $S'$ of size at most $k$.
	Since $S'$ is minimal, by Lemma~\ref{lem:minimal modulator}, $S'\cap B_i=\emptyset$ or $S'\cap B_i=B_i$ for every $i\in\{1,\ldots,m\}$.
	
	We first claim that if $S'\cap(B_1\cup\cdots\cup B_m)$ is empty, then $S'$ is a modulator of $G'$.
	Since $G\setminus S'$ is a $3$-leaf power, by Lemma~\ref{lem:obstruction intersecting}, $G\setminus(B_3\cup\cdots\cup B_{m-2}\cup S')$ is a $3$-leaf power and has no paths between a vertex in $B_2$ and a vertex in~$B_{m-1}$.
	Since $G\setminus(B_3\cup\cdots\cup B_{m-2}\cup S')$ is isomorphic to $G'\setminus(B_\ell\cup S')$, by Lemma~\ref{lem:obstruction intersecting}, $G'\setminus S'$ is a $3$-leaf power, and this proves the first claim.
	
	We second claim that if $S'\cap(B_1\cup B_2\cup B_{m-1}\cup B_m)$ is non-empty, then $S'\cap V(G')$ is a modulator of $G'$.
	Suppose for contradiction that $G'\setminus(S'\cap V(G'))$ has an obstruction~$H$.
	Since $G\setminus(B_3\cup\cdots\cup B_{m-2}\cup S')$ is a $3$-leaf power and is isomorphic to $G'\setminus(B_\ell\cup(S'\cap V(G')))$, we have that $G'\setminus(B_\ell\cup(S'\cap V(G')))$ is a $3$-leaf power.
	Therefore, at least one vertex of $H$ is in $B_\ell$.
	For every $i\in\{1,2,\ell,m-1,m\}$, since $B_i$ is a true twin-set in $G'$, by (\ref{O1}), at most one vertex of $H$ is in $B_i$.
	Then every vertex in $V(H)\cap(B_2\cup B_\ell\cup B_{m-1})$ has degree at most $2$ in~$H$.
	Thus, for every $i\in\{1,2,\ell,m-1,m\}$, by (\ref{O6}),~$B_i$ contains exactly one vertex of~$H$.
	Then $S'\cap V(G')$ contains at least one vertex of $H$, a contradiction, because $H$ is an induced subgraph of $G'\setminus(S'\cap V(G'))$, and this proves the second claim.
	
	Thus, we may assume that $S'\cap(B_1\cup B_2\cup B_{m-1}\cup B_m)$ is empty and $S'\cap(B_3\cup\cdots\cup B_{m-2})$ is non-empty.
	Let $T:=(S'\setminus(B_3\cup\cdots\cup B_{m-2}))\cup B_\ell$.
	Since $G\setminus(B_3\cup\cdots\cup B_{m-2}\cup S')$ is a $3$-leaf power and is isomorphic to $G'\setminus T$, we have that $G'\setminus T$ is a $3$-leaf power.
	Since $S'\cap(B_3\cup\cdots\cup B_{m-2})$ is non-empty,
	\begin{linenomath}
	\begin{align*}
		\abs{T}
		&=\abs{T\setminus(B_3\cup\cdots\cup B_{m-2})}+\abs{B_\ell}\\
		&\leq\abs{S'\setminus(B_3\cup\cdots\cup B_{m-2})}+\abs{B_\ell}\leq\abs{S'}\leq k.
	\end{align*}
	\end{linenomath}
	Therefore, if $(G,k)$ is a yes-instance, then so is $(G',k)$.	
	
	We now show that if $(G',k)$ is a yes-instance, then so is $(G,k)$.
	Suppose that $G'$ has a minimal modulator $S'$ of size at most $k$.
	Since $S'$ is minimal, by Lemma~\ref{lem:minimal modulator}, $S'\cap B_i=\emptyset$ or $S'\cap B_i=B_i$ for every $i\in\{1,2,\ell,m-1,m\}$.
	It suffices to show that $S'$ is a modulator of $G$.
	
	Since $G'\setminus S'$ is a $3$-leaf power, if $S'\cap(B_1\cup B_2\cup B_{m-1}\cup B_m)$ is empty, then by Lemma~\ref{lem:obstruction intersecting}, $G'\setminus(B_\ell\cup S')$ is a $3$-leaf power and has no paths from a vertex in $B_2$ to a vertex in $B_{m-1}$.
	Since $G'\setminus(B_\ell\cup S')$ is isomorphic to $G\setminus(B_3\cup\cdots\cup B_{m-2}\cup S')$, by Lemma~\ref{lem:obstruction intersecting}, $G\setminus S'$ is a $3$-leaf power.
	
	Thus, we may assume that $S'\cap(B_1\cup B_2\cup B_{m-1}\cup B_m)$ is non-empty.
	Suppose for contradiction that $G\setminus S'$ has an obstruction $H$.
	Since $G'\setminus(B_\ell\cup S')$ is a $3$-leaf power and is isomorphic to $G\setminus(B_3\cup\cdots\cup B_{m-2}\cup S')$, we have that $G\setminus(B_3\cup\cdots\cup B_{m-2}\cup S')$ is a $3$-leaf power.
	Therefore, at least one vertex of $H$ is in $B_3\cup\cdots\cup B_{m-2}$.
	For every $i\in\{1,\ldots,m\}$, since $B_i$ is a true twin-set in $G$, by (\ref{O1}), at most one vertex of $H$ is in $B_i$.
	Then every vertex in $V(H)\cap(B_2\cup\cdots\cup B_{m-1})$ has degree at most $2$ in~$H$.
	If $H$ has a vertex $v\in B_j$ for some $j\in\{3,\ldots,m-2\}$, then by (\ref{O6}), both~$B_{j-1}$ and~$B_{j+1}$ have vertices of~$H$.
	This means that $B_i$ contains exactly one vertex of $H$ for every $i\in\{2,\ldots,m-1\}$.
	Then $H$ has vertices in each of $B_1$ and $B_m$ as well by (\ref{O6}).
	Thus, $S'$ contains at least one vertex of $H$, a contradiction.
	Therefore, if $(G',k)$ is a yes-instance, then so is $(G,k)$.
\end{proof}

\section{A polynomial kernel}\label{sec:main proof}

In this section, we prove Theorem~\ref{thm:main}.
To do this, we analyze the size of an instance $(G,k)$ for the \LPD\ problem with $k>0$ to which none of the reduction rules are applicable.

In Subsection~\ref{subsec:number}, we will present two propositions to bound the number of components of $G\setminus S$ by a polynomial in $k+\abs{S}$, where $S$ is a $1$-redundant modulator of $G$.
In Subsection~\ref{subsec:size}, we will present two propositions to bound the size of each component of $G\setminus S$ by a polynomial in $k+\abs{S}$.
Finally, in Subsection~\ref{subsec:main}, we will prove our main theorem.

\subsection{The number of components}\label{subsec:number}

We first show that if (\ref{Rnumberfirst}) is not applicable to $(G,k)$, then the number of non-trivial components of $G\setminus S$ is $O(k\abs{S}^2)$.

\begin{PROP}\label{prop:number nontrivial}
	For an instance $(G,k)$ of the \LPD\ problem with $k>0$ and a non-empty $1$-redundant modulator $S$ of $G$, if (\ref{Rnumberfirst}) is not applicable to $(G,k)$, then $G\setminus S$ has at most $2(k+2)\abs{S}^2$ non-trivial components.
\end{PROP}

To prove Proposition~\ref{prop:number nontrivial}, we will use the following lemma.

\begin{LEM}\label{lem:partitioning}
	Let $G$ be a graph.
	Let $S$ be a $1$-redundant modulator of $G$ and $C$ be a component of $G\setminus S$.
	For a vertex $v\in S$, if $N_G(v)\cap V(C)$ contains distinct vertices $w_1$ and $w_2$ that are not true twins in $C$, then every component of $G\setminus S$ different from $C$ has no vertices in $N_G(v)$.
\end{LEM}
\begin{proof}
	Suppose that there is a component of $G\setminus S$ different from $C$ having a vertex $w$ which is a neighbor of $v$.
	Since $w_1$ and $w_2$ are not true twins in $C$, if $w_1$ and $w_2$ are adjacent, then $C$ has a vertex $w_3$ adjacent to exactly one of $w_1$ and $w_2$.
	Thus, $G[\{v,w,w_1,w_2,w_3\}]$ is isomorphic to the dart if~$w_3$ is adjacent to $v$, and the bull if $w_3$ is non-adjacent to $v$.
	However, the graph has exactly one vertex $v$ in $S$, contradicting the assumption that $S$ is a $1$-redundant modulator of $G$.
	
	Therefore, $w_1$ and $w_2$ are non-adjacent.
	Since $C$ is connected, there is an induced path~$P$ in~$C$ between $w_1$ and $w_2$.
	Since $S$ is a $1$-redundant modulator of $G$ and $v$ is adjacent to both ends of $P$, by Lemma~\ref{lem:not DH}, the length of $P$ is exactly $2$.
	Let $w_3$ be the internal vertex of $P$.
	Then $G[\{v,w,w_1,w_2,w_3\}]$ is isomorphic to the dart if $w_3$ is adjacent to $v$, and has a hole of length $4$ if $w_3$ is non-adjacent to $v$.
	However, the graph has exactly one vertex $v$ in $S$, contradicting the assumption that $S$ is a $1$-redundant modulator of $G$.
\end{proof}

We now prove Proposition~\ref{prop:number nontrivial}.

\begin{proof}[Proof of Proposition~\ref{prop:number nontrivial}]
	Let $S^+$ be the set of vertices $u\in S$ such that for each component $C$ of $G\setminus S$, $N_G(u)\cap V(C)$ is a true twin-set in $C$, and let $S^-:=S\setminus S^+$.
	By Lemma~\ref{lem:partitioning}, each vertex in~$S^-$ is adjacent to at most one component of $G\setminus S$.
	Therefore, $G\setminus S$ has at most $\abs{S^-}$ non-trivial components having neighbors of $S^-$.
	
	Let $Q$ and $M$ be defined as in (\ref{Rnumberfirst}).
	Since (\ref{Rnumberfirst}) is not applicable to $(G,k)$, each non-trivial component of $G\setminus S$ having no neighbors of $S^-$ is incident with exactly one edge in $M$.
	Since each edge in $M$ is incident with some element in $X\times\{1,2,3\}$ and each element in $X\times\{1,2,3\}$ is incident with at most $k+2$ edges,
	\[
		\abs{M}\leq(k+2)\cdot\abs{X\times\{1,2,3\}}\leq(k+2)\cdot3\binom{\abs{S^+}}{2}\leq3(k+2)\abs{S}^2/2.
	\]
	Thus,
	\begin{linenomath}
	\begin{align*}
		\abs{S^-}+\abs{M}
		&\leq\abs{S}+\frac{3(k+2)\abs{S}^2}{2}\\
		&\leq\frac{(k+2)\abs{S}^2}{2}+\frac{3(k+2)\abs{S}^2}{2}=2(k+2)\abs{S}^2,
	\end{align*}
	\end{linenomath}
	and therefore $G\setminus S$ has at most $2(k+2)\abs{S}^2$ non-trivial components.
\end{proof}

We show that if (\ref{Rnumbersecond}) is not applicable to $(G,k)$, then $G\setminus S$ has at most $O(k\abs{S}^4)$ isolated vertices.

\begin{PROP}\label{prop:number isolated}
	For an instance $(G,k)$ of the \LPD\ problem with $k>0$ and a non-empty $1$-redundant modulator $S$ of $G$, if (\ref{Rnumbersecond}) is not applicable to $(G,k)$, then $G\setminus S$ has at most $2(k+3)\abs{S}^4/3$ isolated vertices.
\end{PROP}
\begin{proof}
	Let $\mathcal{A}$, $X$, and $X_{A_1,A_2}$ be defined as in (\ref{Rnumbersecond}).
	Since $S$ is a $1$-redundant modulator of $G$, we may assume that $s:=\abs{S}\geq2$.
	For each subset~$T$ of~$S$ with $2\leq \abs{T}\leq4$, $\mathcal{A}$ contains exactly $2^{\abs{T}}$ elements $(A_1,A_2)$ such that $T=A_1\cup A_2$.
	Therefore, $\abs{\mathcal{A}}$ is at most
	\begin{linenomath}
	\begin{align*}
		2^4\cdot\binom{s}{4}+2^3\cdot\binom{s}{3}+2^2\cdot\binom{s}{2}&\leq\frac{2}{3}(s-1)^4+\frac{4}{3}(s-1)^3+2s(s-1)\\
		&=\frac{2}{3}(s-1)^2((s-1)^2+2(s-1))+2s(s-1)\\
		&\leq\frac{2}{3}(s-1)^2s^2+2s(s-1)\\
		&=2s(s-1)(s^2-s+3)/3\\
		&\leq2s(s-1)(s^2+s)/3=2s^2(s^2-1)/3\leq2s^4/3.
	\end{align*}
	\end{linenomath}
	
	For each element $(A_1,A_2)$ in $\mathcal{A}$, $\abs{X_{A_1,A_2}}\leq k+3$.
	Thus,
	\[
		\left|\bigcup_{(A_1,A_2)\in\mathcal{A}}X_{A_1,A_2}\right|\leq\frac{2(k+3)\abs{S}^4}{3}.
	\]
	Since (\ref{Rnumbersecond}) is not applicable to $(G,k)$, every isolated vertex of $G\setminus S$ is in $\bigcup_{(A_1,A_2)\in\mathcal{A}}X_{A_1,A_2}$.
	Therefore, $G\setminus S$ has at most $2(k+3)\abs{S}^4/3$ isolated vertices.
\end{proof}

\subsection{The size of each component}\label{subsec:size}

We show that if (\ref{Rcomplete}) is not applicable to $(G,k)$, then every complete component of $G\setminus S$ has at most $O(k\abs{S}^4)$ vertices.

\begin{PROP}\label{prop:size complete}
	For an instance $(G,k)$ of the \LPD\ problem with $k>0$ and a non-empty $1$-redundant modulator $S$ of $G$, if (\ref{Rcomplete}) is not applicable to $(G,k)$, then every complete component of $G\setminus S$ has at most $2(k+3)\abs{S}^4/3$ vertices.
\end{PROP}
\begin{proof}
	Let $\mathcal{A}$, $C$, and $X_{A_1,A_2}$ be defined as in (\ref{Rcomplete}).
	Since $S$ is a $1$-redundant modulator of $G$, we may assume that $\abs{S}\geq2$.
	For each subset~$T$ of~$S$ with $2\leq \abs{T}\leq4$, $\mathcal{A}$ contains exactly $2^{\abs{T}}$ elements $(A_1,A_2)$ such that $T=A_1\cup A_2$.
	Therefore, $\abs{\mathcal{A}}\leq2^4\cdot\binom{\abs{S}}{4}+2^3\cdot\binom{\abs{S}}{3}+2^2\cdot\binom{\abs{S}}{2}\leq2\abs{S}^4/3$, as in the proof of Proposition~\ref{prop:number isolated}.
	For each element $(A_1,A_2)$ in $\mathcal{A}$, $\abs{X_{A_1,A_2}}\leq k+3$.
	Thus, $\abs{\bigcup_{(A_1,A_2)\in\mathcal{A}}X_{A_1,A_2}}\leq2(k+3)\abs{S}^4/3$.
	Since (\ref{Rcomplete}) is not applicable to~$(G,k)$, every vertex of $C$ is in $\bigcup_{(A_1,A_2)\in\mathcal{A}}X_{A_1,A_2}$.
	Therefore, $C$ has at most $2(k+3)\abs{S}^4/3$ vertices.
\end{proof}

We show that if none of (\ref{Rnumberfirst}), (\ref{Rtruetwin}), (\ref{Rbagfirst}), (\ref{Rbagsecond}), and (\ref{Rbagthird}) is applicable to $(G,k)$, then every incomplete component of $G\setminus S$ has at most $O(k^2\abs{S}^2)$ vertices.

\begin{PROP}\label{prop:size incomplete}
	For an instance $(G,k)$ of the \LPD\ problem with $k>0$ and a non-empty $1$-redundant modulator $S$ of $G$, if none of (\ref{Rnumberfirst}), (\ref{Rtruetwin}), (\ref{Rbagfirst}), (\ref{Rbagsecond}), and (\ref{Rbagthird}) is applicable to $(G,k)$, then each incomplete component of $G\setminus S$ has at most $(k+1)(k+4)\abs{S}(\abs{S}+2k+15)$ vertices.
\end{PROP}

To prove Proposition~\ref{prop:size incomplete}, we will use the following lemma, which we slightly rephrase for our application.

\begin{LEM}[Brandst\"{a}dt and Le {\cite[Corollary~11]{BL2006}}]\label{lem:twin operation}
	Let $G$ be a connected $3$-leaf power 
	and $v$ be a vertex of~$G$.
	If $v$ is not a cut-vertex, then $G\setminus v$ has a true twin-set $B$ such that $N_G(v)=B$ or $N_G[v]=N_G[B]$.
\end{LEM}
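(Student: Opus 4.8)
The plan is to argue via a tree-clique decomposition. Since $G$ is a $3$-leaf power, Theorem~\ref{thm:tree-clique decomposition} provides a tree-clique decomposition $(F,\{B_u:u\in V(F)\})$ of $G$. First I would note that $G$ is connected, because $G\setminus v$ is connected and $v$ has a neighbor in $G\setminus v$; consequently $F$ is a tree, since if $F$ had at least two components then by the adjacency rule of the decomposition there would be no edges of $G$ between the corresponding parts of $V(G)$. Let $u_0$ be the unique node of $F$ with $v\in B_{u_0}$, and split into two cases depending on whether $\abs{B_{u_0}}\geq2$ or $B_{u_0}=\{v\}$.

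In the case $\abs{B_{u_0}}\geq2$, I would take $B:=B_{u_0}\setminus\{v\}$, which is a non-empty clique (a subset of the bag $B_{u_0}$) whose vertices are pairwise true twins in $G$ and each of which is adjacent to $v$; deleting the common neighbor $v$ preserves these adjacencies and non-adjacencies, so $B$ is a true twin-set in $G\setminus v$. Reading neighborhoods off the decomposition, every vertex of $B_{u_0}$ has the same set of neighbors outside $B_{u_0}$, namely the union of the bags of the nodes adjacent to $u_0$ in $F$; a short computation then gives that both $N_G[v]$ and $N_G[B]$ equal $B_{u_0}\cup\bigcup\{B_w:w\text{ adjacent to }u_0\text{ in }F\}$, which is the desired conclusion $N_G[v]=N_G[B]$.

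In the case $B_{u_0}=\{v\}$, the adjacency rule gives $N_G(v)=\bigcup\{B_w:w\text{ adjacent to }u_0\text{ in }F\}$, so $u_0$ has at least one neighbor in $F$ because $v$ has degree at least $1$. The key point is that $u_0$ has \emph{exactly} one neighbor in $F$: otherwise $u_0$ would have two neighbors $w_1\neq w_2$, which are non-adjacent in the tree $F$, so that any vertices $a\in B_{w_1}$ and $b\in B_{w_2}$ are non-adjacent in $G$; projecting an arbitrary $a$--$b$ path in $G$ to a walk in $F$ shows that every such path must pass through the bag $B_{u_0}=\{v\}$, so $v$ would be a cut-vertex and $G\setminus v$ would be disconnected, a contradiction. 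Hence $u_0$ has a unique neighbor $u_1$ in $F$, and $N_G(v)=B_{u_1}$; as in the previous case one checks from the adjacency rule that the vertices of $B_{u_1}$, being pairwise true twins in $G$ and all adjacent to $v$, remain pairwise true twins in $G\setminus v$, so $B:=B_{u_1}$ is a true twin-set in $G\setminus v$ with $N_G(v)=B$.

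I expect the only step requiring genuine care to be the claim, in the second case, that $u_0$ is a leaf of $F$; this is precisely where the hypothesis that $G\setminus v$ is connected is used, and it rests on translating between paths in $G$ and walks in the tree $F$. Everything else---verifying that the chosen set is a true twin-set in $G\setminus v$, and that the relevant neighborhood identity holds---reduces to routine bookkeeping with the definition of a tree-clique decomposition. I would also keep in mind that Theorem~\ref{thm:tree-clique decomposition} furnishes a tree-clique decomposition for every $3$-leaf power, not only the connected incomplete ones, so the argument also handles the case where $G$ is complete: there one may take the single-node decomposition $F=\{u_0\}$ with $B_{u_0}=V(G)$, which falls under the first case since $\abs{V(G)}\geq2$.
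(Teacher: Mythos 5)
Your argument is correct. Note that the paper does not prove this statement at all: it is quoted (slightly rephrased) from Brandst\"{a}dt and Le, so there is no internal proof to compare with; what you have written is a self-contained derivation from Theorem~\ref{thm:tree-clique decomposition}, which is exactly the tool one would expect to use and is consistent with how the paper treats $3$-leaf powers elsewhere. The case split on $\abs{B_{u_0}}$ is complete: when $\abs{B_{u_0}}\geq2$ the set $B=B_{u_0}\setminus\{v\}$ consists of true twins of $v$ in $G$, so it stays a true twin-set after deleting $v$ and $N_G[v]=N_G[B]=B_{u_0}\cup\bigcup\{B_w: w\sim u_0\}$, with no use of the connectivity hypothesis; when $B_{u_0}=\{v\}$, your walk-projection argument correctly shows that two distinct $F$-neighbors of $u_0$ would make $v$ a cut-vertex (any $G$-path between their bags projects to a walk in the tree $F$ that must visit $u_0$, i.e.\ pass through $v$), so $u_0$ is a leaf and $N_G(v)=B_{u_1}$, which is a true twin-set in $G\setminus v$. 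Your closing remark is also apt: since the lemma only asks for \emph{some} true twin-set (not a maximal one or a bag), no appeal to uniqueness of the decomposition is needed, and complete or small graphs cause no trouble.
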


\begin{proof}[Proof of Proposition~\ref{prop:size incomplete}]
	Let $C$ be an incomplete component of $G\setminus S$ with a tree-clique decomposition $(F,\{B_u:u\in V(F)\})$.
	Since $S$ is a $1$-redundant modulator of $G$, for every vertex $v\in S$, $G[V(C)\cup\{v\}]$ is a $3$-leaf power.
	Thus, if $S$ has a vertex $w$ having a neighbor in a bag $B$ of $C$, then by Lemma~\ref{lem:completeness of bags}, $\{w\}$ is complete to $B$.
	This means that every bag of $C$ is a true twin-set in $G$.
	Since (\ref{Rtruetwin}) is not applicable to $(G,k)$, each bag of $C$ contains at most $k+1$ vertices.
	Therefore, in the remaining of this proof, we are going to bound the number of bags of $C$.
	Let $X$ be the set of leaves of $F$ whose bags are anti-complete to $S$.
	
	\begin{CLM}\label{clm:last1}
		If a node $u$ of $F\setminus X$ has degree at most $1$ in $F\setminus X$, then $B_u\cap N(S)$ is non-empty.
	\end{CLM}
	\begin{subproof}[Proof of Claim~\ref{clm:last1}]
		If $N_F(u)\subseteq X$, then $B_u$ contains a neighbor of $S$, because otherwise $C$ has no neighbors of $S$ and (\ref{Rnumberfirst}) is applicable to $(G,k)$.
		If $N_F(u)\setminus X$ is non-empty, then $N_F(u)\setminus X$ contains exactly one node $u_1$, because $u$ has degree at most $1$ in $F\setminus X$.
		If $B_u$ contains no neighbors of~$S$, then (\ref{Rbagfirst}) is applicable to $(G,k)$ by taking $B_{u_1}$ as $B$.
		Therefore, $B_u$ contains a neighbor of $S$.
	\end{subproof}
	
	\begin{CLM}\label{clm:last2}
		The maximum degree of $F$ is at most $\abs{S}+2k+7$.
	\end{CLM}
	\begin{subproof}[Proof of Claim~\ref{clm:last2}]
		Suppose that $F$ has a node $u$ of degree at least $\abs{S}+2k+8$ in $F$.
		For each vertex $w$ in $S$, if at least two components of $C\setminus B_u$ have neighbors of $w$, then by Lemma~\ref{lem:twin operation}, all components of $C\setminus B_u$ have neighbors of $w$.
		Thus, for each vertex $w$ in $S$, we can choose a component of $C\setminus B_u$, say $D$, such that either all other components of $C\setminus B_u$ have neighbors of $w$, or no other components of $C\setminus B_u$ have neighbors of $w$.
		Since $C\setminus B_u$ has at least $\abs{S}+2k+8$ components, $C\setminus B_u$ has distinct components $D_1,\ldots,D_{2k+7}$ different from $D$ such that for every vertex $w\in S$, either all or none of them have neighbors of $w$.
		Thus, $N_G(V(D_1))=\cdots=N_G(V(D_{2k+7}))$.
		By the pigeonhole principle, $V(D_i)\subseteq N_G(B_u)$ or $\emptyset\neq V(D_i)\cap N_G(B_u)\neq V(D_i)$ is satisfied by at least $k+4$ values of $i$, contradicting the assumption that (\ref{Rbagsecond}) is not applicable to $(G,k)$.
	\end{subproof}
	
	For each vertex $v$ in $S$, let $X_v$ be the set of nodes of $F\setminus X$ whose bags contain neighbors of~$v$, let $S_1$ be the set of vertices $v$ in $S$ such that $X_v$ contains some leaf of $F\setminus X$, and let $S_2:=S\setminus S_1$.
	Note that by Lemma~\ref{lem:twin operation}, for every vertex $v\in S$, if $X_v$ is non-empty, then $F\setminus X$ has a node, say~$p$, such that $X_v=\{p\}$ or $X_v=N_{F\setminus X}[p]$.
	Let $F'$ be a tree obtained from $F\setminus X$ by contracting all edges in $F[X_v]$ for each vertex $v$ in $S$.
	By Claim~\ref{clm:last1},~$F'$ has at most $\abs{S_1}$ leaves, and therefore it has at most $\max(\abs{S_1}-2,0)$ branching nodes.
	Let $Y$ be the set of nodes of $F'$ which come from $X_v$ for some vertex $v\in S$, and let $Z$ be the set of branching nodes of $F'$.
	Then $\abs{Y\cup Z}\leq\abs{Y}+\abs{Z}\leq\abs{S}+\max(\abs{S_1}-2,0)\leq2\abs{S}$.
	Since (\ref{Rbagthird}) is not applicable to $(G,k)$, each component of $F'\setminus(Y\cup Z)$ has at most three nodes.
	Therefore, $\abs{V(F'\setminus(Y\cup Z))}\leq6\abs{S}$.
	Then by Claim~\ref{clm:last2}, $\abs{V(F\setminus X)}$ is at most
	\begin{linenomath}
	\begin{align*}
		\abs{Y}(\abs{S}+2k+8)+\abs{Z}+\abs{V(F'\setminus(Y\cup Z))}&\leq\abs{S}(\abs{S}+2k+8)+\abs{S}+6\abs{S}\\
		&=\abs{S}(\abs{S}+2k+15).
	\end{align*}
	\end{linenomath}
	Since (\ref{Rbagsecond}) is not applicable to $(G,k)$, each node of $F\setminus X$ is adjacent to at most $k+3$ nodes in~$X$.
	Thus, $\abs{V(F)}\leq(k+4)\abs{S}(\abs{S}+2k+15)$.
	By (\ref{Rtruetwin}), each bag of $C$ has at most $k+1$ nodes.
	Therefore, $\abs{V(C)}\leq(k+1)(k+4)\abs{S}(\abs{S}+2k+15)$.
\end{proof}

\subsection{A proof of the main theorem}\label{subsec:main}

We are now ready to prove Theorem~\ref{thm:main}.

\main*

\begin{proof}
	We first show that Algorithm~\ref{alg} is a polynomial-time algorithm.
	Since the algorithm of Proposition~\ref{prop:redundant} is a polynomial-time algorithm, it suffices to show whether each of (\ref{Rnumberfirst}), \ldots, (\ref{Rbagthird}) can be applied in polynomial time.
	Let $(G,k)$ be an instance of the \LPD\ problem with $k>0$ and $S$ be a $1$-redundant modulator of $G$ obtained by Proposition~\ref{prop:redundant}.
	
	For every pair of vertices in $S$ and a set $X\subseteq V(G)$, one can in polynomial time check whether the pair is a blocking pair for $X$.
	Thus, the bipartite graph $Q$ in (\ref{Rnumberfirst}) can be constructed in polynomial time.
	By greedily choosing edges, we can find a maximal $(X\times\{1,2,3\},k+2)$-matching, so that one can apply (\ref{Rnumberfirst}) in polynomial time.
	As shown in the proof of Proposition~\ref{prop:number isolated}, the size of $\mathcal{A}$ is at most $2\abs{S}^4/3$, which is a polynomial in $\abs{V(G)}$.
	For each element $(A_1,A_2)\in\mathcal{A}$, by greedily packing vertices, we can compute $X_{A_1,A_2}$ in polynomial time.
	Therefore, (\ref{Rnumbersecond}) can be applied in polynomial time.
	Similarly, one can apply (\ref{Rcomplete}) in polynomial time.
	One can in polynomial time enumerate all maximal true twin-sets in~$G$ or $G\setminus S$.
	Thus, one can apply (\ref{Rtruetwin}), (\ref{Rbagfirst}), (\ref{Rbagsecond}), and (\ref{Rbagthird}) in polynomial time.
	Therefore, Algorithm~\ref{alg} is a polynomial-time algorithm.
		
	We now claim that if $(G',k')$ is the instance returned in Line~\ref{Return}, then~$G'$ has at most $O(k'^2\abs{S}^6)$ vertices.
	By Proposition~\ref{prop:size complete}, each complete component of $G'\setminus S$ has at most $2(k'+3)\abs{S}^4/3$ vertices.
	By Proposition~\ref{prop:size incomplete}, each incomplete component of $G'\setminus S$ has at most $(k'+1)(k'+4)\abs{S}(\abs{S}+2k'+15)$ vertices.
	Therefore, each non-trivial component of $G'\setminus S$ has at most $O(k'\abs{S}^4)$ vertices.
	By Proposition~\ref{prop:number nontrivial}, $G'\setminus S$ has at most $2(k'+2)\abs{S}^2$ non-trivial components.
	Thus, the union of all non-trivial components of $G'\setminus S$ has at most $2(k'+2)\abs{S}^2\cdot O(k'\abs{S}^4)=O(k'^2\abs{S}^6)$ vertices.
	By Proposition~\ref{prop:number isolated}, $G'\setminus S$ has at most $2(k'+3)\abs{S}^4/3$ isolated vertices.
	Thus, $\abs{V(G')}\leq\abs{S}+2(k'+3)\abs{S}^4/3+O(k'^2\abs{S}^6)=O(k'^2\abs{S}^6)$.
	By Line~\ref{k}, $k=k'$ and therefore 
	by Proposition~\ref{prop:redundant}, $\abs{S}\le 84k'^2+7k'$.
	We conclude that$\abs{V(G')}=O(k'^{14})$.
\end{proof}

\section{Conclusions}\label{sec:conclu}

We show that the \LPD\ problem admits a kernel with $O(k^{14})$ vertices.
It would be interesting to significantly reduce the size of the kernel or present a lower bound for the size of the kernel of the \textsc{$\ell$-Leaf Power Deletion} for any $\ell\geq3$, under some complexity hypothesis.

Gurski and Wanke~\cite{GW2009} stated that $\ell$-leaf powers have bounded clique-width for every non-negative integer $\ell$.
Rautenbach~\cite{Rautenbach2006} presented a characterization of $4$-leaf powers having no true twins as chordal graphs with ten forbidden induced subgraphs.
This can be used to express, in monadic second-order logic, whether a graph is a $4$-leaf power and whether there is a vertex set of size at most $k$ whose deletion makes the graph a $4$-leaf power.
Therefore, by using the algorithm in \cite{CMR2000}, the \textsc{$4$-Leaf Power Deletion} problem is fixed-parameter tractable when parameterized by $k$.
It is natural to ask whether the \textsc{$4$-Leaf Power Deletion} problem admits a polynomial kernel.
For $\ell\geq5$, we do not know whether we can express $\ell$-leaf powers in monadic second-order logic.
If it is true for some $\ell$, then the \textsc{$\ell$-Leaf Power Deletion} problem is fixed-parameter tractable.


\begin{thebibliography}{10}

\bibitem{ALMSZ2019}
Akanksha Agrawal, Daniel Lokshtanov, Pranabendu Misra, Saket Saurabh, and
  Meirav Zehavi.
\newblock Feedback vertex set inspired kernel for chordal vertex deletion.
\newblock {\em ACM Trans. Algorithms}, 15(1):Art. 11, 28, 2019.

\bibitem{intervalkernel}
Akanksha Agrawal, Pranabendu Misra, Saket Saurabh, and Meirav Zehavi.
\newblock Interval vertex deletion admits a polynomial kernel.
\newblock In {\em Proceedings of the {T}hirtieth {A}nnual {ACM}-{SIAM}
  {S}ymposium on {D}iscrete {A}lgorithms}, pages 1711--1730. SIAM,
  Philadelphia, PA, 2019.

\bibitem{Thesis-Ahn}
Jungho Ahn.
\newblock {A polynomial kernel for $3$-leaf power deletion}.
\newblock Master's thesis, KAIST, South Korea, 2020.

\bibitem{AEKO-MFCS2020}
Jungho Ahn, Eduard Eiben, O-joung Kwon, and Sang-il Oum.
\newblock A polynomial kernel for $3$-leaf power deletion.
\newblock In {\em 45th {I}nternational {S}ymposium on {M}athematical
  {F}oundations of {C}omputer {S}cience}, LIPIcs. Leibniz Int. Proc. Inform.
  Schloss Dagstuhl. Leibniz-Zent. Inform., Wadern, 2020.

\bibitem{AEKO-journal}
Jungho Ahn, Eduard Eiben, O-joung Kwon, and Sang-il Oum.
\newblock A polynomial kernel for $3$-leaf power deletion.
\newblock {\em Algorithmica}, 85(10):3058--3087, 2023.

\bibitem{BBF1999}
Vineet Bafna, Piotr Berman, and Toshihiro Fujito.
\newblock A {$2$}-approximation algorithm for the undirected feedback vertex
  set problem.
\newblock {\em SIAM J. Discrete Math.}, 12(3):289--297, 1999.

\bibitem{BM1986}
Hans-J\"{u}rgen Bandelt and Henry~Martyn Mulder.
\newblock Distance-hereditary graphs.
\newblock {\em J. Combin. Theory Ser. B}, 41(2):182--208, 1986.

\bibitem{BL2006}
Andreas Brandst\"{a}dt and Van~Bang Le.
\newblock Structure and linear time recognition of 3-leaf powers.
\newblock {\em Inform. Process. Lett.}, 98(4):133--138, 2006.

\bibitem{BLS2008}
Andreas Brandst\"{a}dt, Van~Bang Le, and R.~Sritharan.
\newblock Structure and linear-time recognition of 4-leaf powers.
\newblock {\em ACM Trans. Algorithms}, 5(1):Art. 11, 22, 2009.

\bibitem{CK2007}
Maw-Shang Chang and Ming-Tat Ko.
\newblock The 3-{S}teiner root problem.
\newblock In {\em Graph-theoretic concepts in computer science}, volume 4769 of
  {\em Lecture Notes in Comput. Sci.}, pages 109--120. Springer, Berlin, 2007.

\bibitem{CKX2010}
Jianer Chen, Iyad~A. Kanj, and Ge~Xia.
\newblock Improved upper bounds for vertex cover.
\newblock {\em Theoret. Comput. Sci.}, 411(40-42):3736--3756, 2010.

\bibitem{CMR2000}
Bruno Courcelle, Johann~A. Makowsky, and Udi Rotics.
\newblock Linear time solvable optimization problems on graphs of bounded
  clique-width.
\newblock {\em Theory Comput. Syst.}, 33(2):125--150, 2000.

\bibitem{Cunningham1982}
William~H. Cunningham.
\newblock Decomposition of directed graphs.
\newblock {\em SIAM J. Algebraic Discrete Methods}, 3(2):214--228, 1982.

\bibitem{CyganFKLMPPS15}
Marek Cygan, Fedor~V. Fomin, {\L}ukasz Kowalik, Daniel Lokshtanov, D{\'{a}}niel
  Marx, Marcin Pilipczuk, Micha{\l} Pilipczuk, and Saket Saurabh.
\newblock {\em Parameterized algorithms}.
\newblock Springer, Cham, 2015.

\bibitem{DGHL2006}
Michael Dom, Jiong Guo, Falk H\"{u}ffner, and Rolf Niedermeier.
\newblock Error compensation in leaf power problems.
\newblock {\em Algorithmica}, 44(4):363--381, 2006.

\bibitem{DF2013}
Rodney~G. Downey and Michael~R. Fellows.
\newblock {\em Fundamentals of parameterized complexity}.
\newblock Texts in Computer Science. Springer, London, 2013.

\bibitem{Ducoffe2019}
Guillaume Ducoffe.
\newblock The 4-{S}teiner root problem.
\newblock In {\em Graph-theoretic concepts in computer science}, volume 11789
  of {\em Lecture Notes in Comput. Sci.}, pages 14--26. Springer, Berlin, 2019.

\bibitem{EGK2018}
Eduard Eiben, Robert Ganian, and O-joung Kwon.
\newblock A single-exponential fixed-parameter algorithm for
  distance-hereditary vertex deletion.
\newblock {\em J. Comput. System Sci.}, 97:121--146, 2018.

\bibitem{EH2019}
David Eppstein and Elham Havvaei.
\newblock Parameterized leaf power recognition via embedding into graph
  products.
\newblock {\em Algorithmica}, 82(8):2337--2359, 2020.

\bibitem{cluster_kernel}
Fedor~V. Fomin, Tien-Nam Le, Daniel Lokshtanov, Saket Saurabh, St\'{e}phan
  Thomass\'{e}, and Meirav Zehavi.
\newblock Subquadratic kernels for implicit 3-hitting set and 3-set packing
  problems.
\newblock {\em ACM Trans. Algorithms}, 15(1):Art. 13, 44, 2019.

\bibitem{kernelization}
Fedor~V. Fomin, Daniel Lokshtanov, Saket Saurabh, and Meirav Zehavi.
\newblock {\em Kernelization}.
\newblock Cambridge University Press, Cambridge, 2019.
\newblock Theory of parameterized preprocessing.

\bibitem{GW2009}
Frank Gurski and Egon Wanke.
\newblock The {NLC}-width and clique-width for powers of graphs of bounded
  tree-width.
\newblock {\em Discrete Appl. Math.}, 157(4):583--595, 2009.

\bibitem{perfect2013}
Pinar Heggernes, Pim van~'t Hof, Bart M.~P. Jansen, Stefan Kratsch, and Yngve
  Villanger.
\newblock Parameterized complexity of vertex deletion into perfect graph
  classes.
\newblock {\em Theoret. Comput. Sci.}, 511:172--180, 2013.

\bibitem{HKMN2010}
Falk H\"{u}ffner, Christian Komusiewicz, Hannes Moser, and Rolf Niedermeier.
\newblock Fixed-parameter algorithms for cluster vertex deletion.
\newblock {\em Theory Comput. Syst.}, 47(1):196--217, 2010.

\bibitem{JP2018}
Bart M.~P. Jansen and Marcin Pilipczuk.
\newblock Approximation and kernelization for chordal vertex deletion.
\newblock {\em SIAM J. Discrete Math.}, 32(3):2258--2301, 2018.

\bibitem{KK2017-block}
Eun~Jung Kim and O-joung Kwon.
\newblock A polynomial kernel for block graph deletion.
\newblock {\em Algorithmica}, 79(1):251--270, 2017.

\bibitem{KK2021}
Eun~Jung Kim and O-joung Kwon.
\newblock A polynomial kernel for distance-hereditary vertex deletion.
\newblock {\em Algorithmica}, 83(7):2096--2141, 2021.

\bibitem{Lafond2022}
Manuel Lafond.
\newblock Recognizing {$k$}-leaf powers in polynomial time, for constant {$k$}.
\newblock In {\em Proceedings of the 2022 {A}nnual {ACM}-{SIAM} {S}ymposium on
  {D}iscrete {A}lgorithms ({SODA})}, pages 1384--1410. [Society for Industrial
  and Applied Mathematics (SIAM)], Philadelphia, PA, 2022.

\bibitem{Lampis2011}
Michael Lampis.
\newblock A kernel of order {$2k-c\log k$} for vertex cover.
\newblock {\em Inform. Process. Lett.}, 111(23-24):1089--1091, 2011.

\bibitem{LY1980}
John~M. Lewis and Mihalis Yannakakis.
\newblock The node-deletion problem for hereditary properties is {NP}-complete.
\newblock {\em J. Comput. System Sci.}, 20(2):219--230, 1980.

\bibitem{LPcure}
Daniel Lokshtanov, N.~S. Narayanaswamy, Venkatesh Raman, M.~S. Ramanujan, and
  Saket Saurabh.
\newblock Faster parameterized algorithms using linear programming.
\newblock {\em ACM Trans. Algorithms}, 11(2):Art. 15, 31, 2014.

\bibitem{NRT2002}
Naomi Nishimura, Prabhakar Ragde, and Dimitrios~M. Thilikos.
\newblock On graph powers for leaf-labeled trees.
\newblock {\em J. Algorithms}, 42(1):69--108, 2002.

\bibitem{Rautenbach2006}
Dieter Rautenbach.
\newblock Some remarks about leaf roots.
\newblock {\em Discrete Math.}, 306(13):1456--1461, 2006.

\end{thebibliography}
\end{document}